\newtheorem{theo}{Theorem}
\newtheorem{defi}[theo]{Definition}
\newtheorem{prop}[theo]{Proposition}
\def\nn{\nonumber}
\def\deg{\mathop{\rm deg}\nolimits}
\def\qdots{\mathinner{\mkern1mu\raise1pt\vbox{\kern7pt\hbox{.}}\mkern2mu \raise4pt\hbox{.}\mkern2mu\raise7pt\hbox{.}\mkern1mu}}
\def\Z{{\mathbb Z}}
\def\gl{\mathfrak{gl}}
\def\ssl{\mathfrak{sl}}
\def\g{\mathfrak{g}}
\def\h{\mathfrak{h}}
\def\so{\mathfrak{so}}
\def\osp{\mathfrak{osp}}
\def\pso{\mathfrak{pso}}
\def\B{\mathfrak{B}}
\def\lb{\llbracket}
\def\rb{\rrbracket}
\def\beq{\begin{equation}}
\def\eeq{\end{equation}}
\newcommand{\overbar}[1]{\mkern 3.0mu\overline{\mkern-3.0mu#1\mkern-3.0mu}\mkern 3.0mu}
\renewcommand{\theequation}{\arabic{section}.{\arabic{equation}}}
\begin{document}
\begin{center}
{\Large \bf
Representations of the Lie superalgebra $\B(\infty,\infty)$\\[2mm] 
and parastatistics Fock spaces} \\[5mm]
{\bf N.I.~Stoilova}\footnote{E-mail: stoilova@inrne.bas.bg}\\[1mm] 
Institute for Nuclear Research and Nuclear Energy,\\ 
Boul.\ Tsarigradsko Chaussee 72, 1784 Sofia, Bulgaria\\[2mm] 
{\bf J.\ Van der Jeugt}\footnote{E-mail: Joris.VanderJeugt@UGent.be}\\[1mm]
Department of Applied Mathematics, Computer Science and Statistics, Ghent University,\\
Krijgslaan 281-S9, B-9000 Gent, Belgium.
\end{center}

%\addtolength{\baselineskip}{2mm}
%\addtolength{\abovedisplayskip}{1mm}
%\addtolength{\belowdisplayskip}{1mm}
%\addtolength{\parskip}{2mm}

\begin{abstract}
The algebraic structure generated by the creation and annihilation operators of a system of $m$ parafermions and $n$ parabosons, satisfying the mutual parafermion relations, is known to be the Lie superalgebra $\osp(2m+1|2n)$.
The Fock spaces of such systems are then certain lowest weight representations of $\osp(2m+1|2n)$.
In the current paper, we investigate what happens when the number of parafermions and parabosons becomes infinite.
In order to analyze the algebraic structure, and the Fock spaces, we first need to develop a new matrix form for the Lie superalgebra $\B(n,n)=\osp(2n+1|2n)$, and construct a new Gelfand-Zetlin basis of the Fock spaces in the finite rank case.
The new structures are appropriate for the situation $n\rightarrow\infty$.
The algebra generated by the infinite number of creation and annihilation operators is $\B(\infty,\infty)$, a well defined infinite rank version of the orthosymplectic Lie superalgebra. 
The Fock spaces are lowest weight representations of $\B(\infty,\infty)$, with a basis consisting of particular row-stable Gelfand-Zetlin patterns. 
\end{abstract}

%\vskip 10mm
%\noindent Running title: Lie superalgebra $\B(\infty,\infty)$ and parastatistics

%\noindent PACS numbers: 03.65.-w, 03.65.Fd, 02.20.-a, 11.10.-z

%%%%%%%%%%%%%%%%%%%%%%%%%%%%%%%%%%%%%%%%%%%%%%%%%%%%%%%%%%%%%%%%%%%%%%%%%%%%%%%
%%%%% section
%%%%%%%%%%%%%%%%%%%%%%%%%%%%%%%%%%%%%%%%%%%%%%%%%%%%%%%%%%%%%%%%%%%%%%%%%%%%%%%
\setcounter{equation}{0}
\section{Introduction} \label{sec:Introduction}%

The original motivation for the present paper comes from some classical physical ideas, and more precisely from some conceptual difficulties of quantum mechanics.  
Let us sketch some of the historical background, starting with Wigner's paper ``Do the equations of motion determine the quantum mechanical commutation relations?''~\cite{Wigner}. 
On the simplest example, the one dimensional harmonic oscillator, Wigner showed that a more general approach is to start from the equations of motion, Hamilton's equations and the Heisenberg equations, instead of assuming that the position and momentum operators are subject to the canonical commutation relations. 
Actually he found an infinite set of solutions of the compatibility conditions of Hamilton's equations and the Heisenberg equations.
One of these solutions coincides with the canonical commutation relations. 
Each of these solutions is now known to correspond to a representation of the Lie superalgebra $\osp(1|2)$. 

Wigner's solutions were generalized in 1953 by Green who wrote down the so called paraboson relations~\cite{Green}, generalizing the quadratic relations for Bose operators. 
He also generalized the ordinary Fermi operators to parafermion operators~\cite{Green}. 
Parabosons and parafermions were applied in particular to quantum field theory~\cite{Haag, Wu, Ohnuki} and to generalizations of quantum statistics~\cite{Green, GM, Tolstoy2014, YJ, YJ2, KD, Nelson, Kitabayashi, Huerta2017, Huerta2018 } (parastatistics).

We have been interested in parabosons and parafermions primarily because of the underlying algebraic structures, and the relation between Fock spaces and representations of these algebraic structures.
Let us make this concrete by recalling some of the known algebraic relations.
Whereas fermion operators satisfy quadratic (anti-commutation) relations, parafermion operators satisfy certain triple relations: a system of $m$ parafermion creation and annihilation operators $f_j^\pm$ ($j=1,\ldots,m$) is determined by
\begin{equation}
[[f_{ j}^{\xi}, f_{ k}^{\eta}], f_{l}^{\epsilon}]=
|\epsilon -\eta| \delta_{kl} f_{j}^{\xi} - |\epsilon -\xi| \delta_{jl}f_{k}^{\eta}, 
\label{f-rels}
\end{equation}
where $j,k,l\in \{1,2,\ldots,m\}$ and $\eta, \epsilon, \xi \in\{+,-\}$ (to be interpreted as $+1$ and $-1$
in the algebraic expressions $\epsilon -\xi$ and $\epsilon -\eta$).
Similarly, a system of $n$ pairs of parabosons $b_j^\pm$ satisfies
\begin{equation}
[\{ b_{ j}^{\xi}, b_{ k}^{\eta}\} , b_{l}^{\epsilon}]= 
(\epsilon -\xi) \delta_{jl} b_{k}^{\eta}  + (\epsilon -\eta) \delta_{kl}b_{j}^{\xi}.
\label{b-rels}
\end{equation}
These cubic or triple relations involve nested (anti-)commutators, just like the Jacobi identity of Lie (super)algebras.
It was indeed shown later~\cite{Kamefuchi,Ryan} that the parafermionic algebra determined by~\eqref{f-rels} is 
the orthogonal Lie algebra $\mathfrak{so}(2m+1)$, and that the parabosonic algebra determined by~\eqref{b-rels} is the orthosymplectic Lie superalgebra $\mathfrak{osp}(1|2n)$~\cite{Ganchev}. 

Greenberg and Messiah~\cite{GM} considered combined systems of parafermions and parabosons. 
Apart from two trivial combinations, there are two non-trivial relative commutation relations between parafermions and parabosons, also expressed by means of triple relations. 
The first of these are the so-called relative parafermion relations, determined by:
\begin{align}
&[[f_{ j}^{\xi}, f_{ k}^{\eta}], b_{l}^{\epsilon}]=0,\qquad [\{b_{ j}^{\xi}, b_{ k}^{\eta}\}, f_{l}^{\epsilon}]=0, \nn\\
&[[f_{ j}^{\xi}, b_{ k}^{\eta}], f_{l}^{\epsilon}]= -|\epsilon-\xi| \delta_{jl} b_k^{\eta}, \qquad
\{[f_{ j}^{\xi}, b_{ k}^{\eta}], b_{l}^{\epsilon}\}= (\epsilon-\eta) \delta_{kl} f_j^{\xi}.
\label{rel-pf}
\end{align}
The parastatistics algebra with relative parafermion relations, determined by~\eqref{f-rels}, 
\eqref{b-rels} and~\eqref{rel-pf}, was identified by
Palev~\cite{Palev1} and is the orthosymplectic Lie superalgebra $\mathfrak{osp}(2m+1|2n)$.
The second case, where~\eqref{f-rels} and~\eqref{b-rels} are combined with so-called relative paraboson relations, leads to an algebra which has received attention in a number of papers~\cite{YJ,YJ2,KA,KK,Tolstoy2014}, and is no longer a Lie superalgebra but a $\Z_2\times\Z_2$-graded algebraic structure~\cite{Tolstoy2014,SV2018}.

The identification of the underlying algebraic structures is one important aspect.
The other important problem is to identify and describe the paraboson/parafermion Fock spaces as representations of the algebras involved. 
It is to this problem that we have contributed a number of solutions during the last years.
These Fock spaces are characterized by a positive integer~$p$, often referred to as the order of statistics.
Although there have been many approaches to parastatistics Fock spaces (often based on the so-called Green's ansatz), the construction of a complete orthogonal basis of the Fock space and explicit actions of the parastatistics operators on these basis vectors has been completed only fairly recently.
The explicit Fock representations for a system of $m$ parafermions was given in~\cite{parafermion}, and that for a system of $n$ parabosons in~\cite{paraboson}.
In both of these cases, the analysis of the Fock space by means of a Lie subalgebra of type $\gl$ of $\so(2m+1)$ or $\osp(1|2n)$ was essential, since that subalgebra provided a Gelfand-Zetlin (GZ) basis for the Fock space.
In~\cite{SV2015}, we managed to extend these results to a combined system of $m$ parafermions and $n$ parabosons, satisfying the mutual relations~\eqref{rel-pf}. For this case, the Lie superalgebra $\gl(m|n)$ as subalgebra of $\osp(2m+1|2n)$ was essential, providing a GZ basis of the Fock spaces, being certain infinite-dimensional lowest weight representations of $\osp(2m+1|2n)$.

If one thinks of quantum fields described by parafermions and parabosons, one should consider however an infinite number of parabosons and parafermions~\cite{Palev92}.
Also from the mathematical point of view, letting $m$ and $n$ go to $+\infty$ raises many interesting questions.
For an infinite set of parafermions only, the underlying algebra becomes the infinite rank Lie algebra $\so(\infty)$, whereas for an infinite set of parabosons the underlying algebra becomes the infinite rank Lie superalgebra $\osp(1|\infty)$.
Both of these cases have been studied, and for the Fock space it was possible to extend the GZ-basis of the finite rank algebras to the infinite rank case~\cite{SV2009}.
For a combined system of an infinite number of parafermions and an infinite number of parabosons, the problem of identifying the underlying algebra and the structure of the Fock spaces remained open.
This is in fact the problem being solved in the current paper.

Note that this problem is not a straightforward limit generalization of the solution described in~\cite{SV2015}.
Indeed, for the $\osp(2m+1|2n)$ solution, the basis of the Fock space consists of GZ-patterns related to a subalgebra chain of the following type:
\begin{align}
&\osp(2m+1|2n) \supset \mathfrak{gl}(m|n) \supset \mathfrak{gl}(m|n-1) \supset \mathfrak{gl}(m|n-2) \supset \cdots \nn\\
&\supset \mathfrak{gl}(m|1) \supset \mathfrak{gl}(m) \supset \mathfrak{gl}(m-1) \supset\cdots\supset \mathfrak{gl}(2) \supset \mathfrak{gl}(1).
\label{chain-mn}
\end{align}
As already discussed in~\cite{SV2016}, the corresponding GZ-basis patterns could possibly be extended for $n$ going to $\infty$, but not for both $m$ and $n$ going to infinity, since the above chain of subalgebras does not provide proper GZ-basis vectors for $\gl(\infty|\infty)$.
Instead, one has to introduce a different GZ-basis, referred to as the ``odd Gelfand-Zetlin basis'', and constructed for covariant representations of $\gl(n|n)$ in~\cite{SV2016}.
Note that one had to take $m=n$ in order to construct this basis. 
So in the case under consideration, the relevant chain of subalgebras is
\begin{align}
& \osp(2n+1|2n)\supset \mathfrak{gl}(n|n) \supset \mathfrak{gl}(n|n-1) \supset \mathfrak{gl}(n-1|n-1) \supset \mathfrak{gl}(n-1|n-2) \supset \mathfrak{gl}(n-2|n-2) \supset \cdots \nn\\
& \supset \gl(2|2) \supset \gl(2|1) \supset \mathfrak{gl}(1|1) \supset \mathfrak{gl}(1).
\label{chain-n=n}
\end{align}
The ``odd'' GZ-patterns for the Fock spaces, as representations of $\osp(2n+1|2n)$, derived from this chain will have certain stability properties that allow us to let $n$ grow to infinity.

In the present paper, we therefore start from the Lie superalgebra $\B(n,n)=\osp(2n+1|2n)$, for which we provide a new matrix realization in Section~\ref{sec:B} (as the common one cannot be extended to infinite-dimensional matrices).
In this new matrix realization, the operators corresponding to $n$ parafermions and $n$ parabosons are identified, and seen to generate a basis for $\osp(2n+1|2n)$.
The Fock space of order $p$ for such a set of parastatistics operators is identified as a lowest weight representation $V(p,n)$ of $\osp(2n+1|2n)$ in Section~\ref{sec:C}.
The main difference with~\cite{SV2015} is that now the basis vectors of $V(p,n)$ are given in another form, namely as certain odd GZ-patterns of $\gl(n|n)$.
Since we are dealing with a new basis, the actions of the parastatistics operators also needs to be recomputed. 
This computation is really involved, although it follows the same line of thought as in~\cite{SV2015}, namely the matrix elements of the parastatistics operators should be products of $\gl(n|n)$ Clebsch-Gordan coefficients and reduced matrix elements.
The reduced matrix elements are basis-independent, so they must coincide with the ones computed in~\cite{SV2015}.
But the $\gl(n|n)$ Clebsch-Gordan coefficients depend on the basis, so we had to recompute them in this ``odd'' GZ-basis. 
The results of this computation is given separately in Appendix~\ref{A}.
In Section~\ref{sec:D}, we study the basis vectors of the Fock space $V(p,n)$ of $\B(n,n)$ in further detail.
The integer entries in the GZ-patterns of the basis vectors turn out to have interesting (combinatorial) properties.
They have many stability properties that are preserved under the action of the parastatistics creation and annihilation operators.
These stability properties allow us to define GZ-patterns for the case that $n$ becomes infinite.
In Section~\ref{sec:E}, the infinite rank Lie superalgebra $\B(\infty,\infty)$ is defined by means of a matrix form, consisting of certain infinite square matrices with only a finite number of nonzero entries.
The identification of $\B(\infty,\infty)$ as the Lie superalgebra generated by an infinite number of parafermions and parabosons (subject to particular mutual relations) is then rather straightforward.
Then we turn to the Fock spaces $V(p)$ of such combined systems of parafermions and parabosons.
Our analysis shows that a basis for these Fock spaces is given by infinite but row-stable odd GZ-patterns.
The action of the parastatistics creation and annihilation operators on these basis vectors is given, and the rest of the section is devoted to proving that $V(p)$ is indeed an irreducible representation of $\B(\infty,\infty)$ under the given action.
The underlying idea is that the row-stability of the infinite GZ-patterns allow one to extend operator actions valid for the finite rank case to the case that $n$ becomes infinite.

The current paper forms the closing piece of a number of studies in which Fock representations were explicitly constructed, chronologically for a set of $n$ parabosons~\cite{paraboson}, a set of $m$ parafermions~\cite{parafermion}, an infinite set of parabosons or parafermions~\cite{SV2009}, a combined set of $m$ para\-fermions and $n$ parabosons~\cite{SV2015,SV2018}, and now a combined set of an infinite number of parafermions and parabosons.

%%%%%%%%%%%%%%%%%%%%%%%%%%%%%%%%%%%%%%%%%%%%%%%%%%%%%%%%%%%%%%%%%%%%%%%%%%%%%%%
%%%%% section
%%%%%%%%%%%%%%%%%%%%%%%%%%%%%%%%%%%%%%%%%%%%%%%%%%%%%%%%%%%%%%%%%%%%%%%%%%%%%%%
\section{The Lie superalgebras $\B(n|n)\equiv \osp(2n+1|2n)$}
\setcounter{equation}{0} \label{sec:B}

In order to extend the results and the notation of the present section to the case of an infinite rank Lie superalgebra, it will be necessary to work with a different matrix realization (over the complex numbers) of the Lie superalgebra $\B(n|n)\equiv \osp(2n+1|2n)$ than the commonly one used in~\cite{Kac,SV2015}. 
We will give this new matrix realization here for $\B(n,n)$, since that is the only case we need, but it is clear that it could be given for the more general case $\B(m,n)$.
For the labelling of rows and columns of matrices (and other objects), it will be convenient to use both negative and positive integers. 
In particular, when $m$ and $n$ are non-negative integers, we will use the following notation for ordered sets:
\beq
[-m,n]=\{-m,\ldots,-2,-1,0,1,2,\ldots,n\}, \qquad [-m,n]^*=\{-m,\ldots,-2,-1,1,2,\ldots,n\}.
\eeq
Sometimes it will be convenient to write the minus sign of an index as an overlined number. 
With this convention, we have e.g.
\[
[\bar{2},3]^*=\{\bar{2},\bar{1},1,2,3\}=\{-2,-1,1,2,3\} \qquad\hbox{and}\qquad 
[\bar{n},\bar{1}]=\{\bar{n},\ldots,\bar{2},\bar{1}\}=\{-n,\ldots,-2,-1\}.
\]
We will also use
\[
\Z^*=\Z\setminus\{0\},\qquad \Z_+=\{0,1,2,\ldots\},\qquad \Z_+^*=\{1,2,3,\ldots\}
\]
and similarly for $\Z_-$ and $\Z_-^*$.

Let $I$ and $J$ be the $(2\times 2)$-matrices
\beq
I:=\left(\begin{array}{cc} 0&1 \\ 1&0 \end{array}\right), \qquad
J:=\left(\begin{array}{cc} 0&1 \\ -1&0 \end{array}\right), 
\eeq
and let $B$ be the $(4n+1)\times(4n+1)$-matrix, with indices in $[-2n,2n]$, given by $B=I\oplus \cdots \oplus I \oplus 1 \oplus J \oplus \cdots \oplus J$, or, written in block form:
\beq
B:=
\left( \begin{array}{ccc:c:ccc}
I & 0 &&&&& \\
0 & \ddots & 0 &&&&\\
 & 0 & I &&&&\\
\hdashline
 & & & 1 & & & \\
\hdashline
&&&& J & 0 & \\
&&&& 0 & \ddots & 0\\
&&&& & 0 & J
\end{array} \right).
\eeq
Herein, 0 stands for the zero $(2\times2)$-matrix, the entry 1 is at position $(0,0)$, and the empty parts of the matrix consist of zeros.

The matrices $X$ of the Lie superalgebra $\B(n,n)$ will have the following block form:
\beq
X:=
\left( \begin{array}{ccc:c|ccc}
X_{\bar{n},\bar{n}} & \cdots & X_{\bar{n},\bar{1}} & X_{\bar{n},0} & X_{\bar{n},1} & \cdots & X_{\bar{n},n}\\
\vdots & \ddots & \vdots & \vdots & \vdots & \ddots & \vdots \\
X_{\bar{1},\bar{n}} & \cdots & X_{\bar{1},\bar{1}} & X_{\bar{1},0} & X_{\bar{1},1} & \cdots & X_{\bar{1},n}\\
\hdashline
X_{{0},\bar{n}} & \cdots & X_{{0},\bar{1}} & 0 & X_{{0},1} & \cdots & X_{{0},n}\\
\hline
X_{{1},\bar{n}} & \cdots & X_{{1},\bar{1}} & X_{{1},0} & X_{{1},1} & \cdots & X_{{1},n}\\
\vdots & \ddots & \vdots & \vdots & \vdots & \ddots & \vdots \\
X_{{n},\bar{n}} & \cdots & X_{{n},\bar{1}} & X_{{n},0} & X_{{n},1} & \cdots & X_{{n},n}
\end{array} \right).
\label{X}
\eeq
Herein, any matrix of the form $X_{ij}$ with $i,j \in[\bar{n},n]^*$ is a $(2\times 2)$-matrix, $X_{0,i}$ is a $(1\times 2)$-matrix and $X_{i,0}$ a $(2\times 1)$-matrix.

The Lie superalgebra $\g=\B(n,n)=\osp(2n+1|2n)$ is $\Z_2$-graded, $\g=\g_0\oplus\g_1$, whose homogeneous elements are referred to as even and odd elements, and the degree of a homogeneous element $X$ is denoted by $\deg(X)$.
The even matrices $X$ will have zeros in the upper right and bottom left blocks, i.e.\ $X_{ij}=0$ for all $(i,j)\in[\bar{n},0]\times[1,n]$ and $(i,j)\in[1,n]\times[\bar{n},0]$.
The odd matrices $X$ will have zeros in the upper left and bottom right blocks, i.e.\ $X_{ij}=0$ for all $(i,j)\in[\bar{n},0]\times[\bar{n},0]$ and $(i,j)\in[1,n]\times[1,n]$.

The actual definition, derived from~\cite{Kac}, is then as follows: $\B(n,n)_0$ consists of all even matrices $X$ of the form~\eqref{X} such that 
\[
X^TB+BX=0;
\]
$\B(n,n)_1$ consists of all odd matrices $X$ of the form~\eqref{X} such that 
\[
X^{ST}B-BX=0.
\]
Herein $X^T$ is the ordinary transpose of $X$; $X^{ST}$ is the supertranspose of $X$, which is, for an odd matrix of the form $X=\left(\begin{array}{c|c} 0 &U\\ \hline V&0\end{array}\right)$ given by $X^{ST}=\left(\begin{array}{c|c} 0 & V^T\\ \hline -U^T&0\end{array}\right)$.

Concretely, $X$ belongs to $\B(n,n)$ provided the blocks of~\eqref{X} satisfy:
\begin{align*}
& I X_{\bar{i},\bar{j}}+X_{\bar{j},\bar{i}}^T I=0, \quad
JX_{{i},{j}}+X_{{j},{i}}^T J=0, \quad
I X_{\bar{i},{j}}-X_{{j},\bar{i}}^T J=0 \qquad(i,j\in[1,n]);\\
& X_{0,\bar{j}}+X_{\bar{j},0}^T I=0, \quad X_{0,{j}}-X_{{j},0}^T  J=0 \qquad(j\in[0,n]).
\end{align*}

For homogeneous elements of type~\eqref{X}, the Lie superalgebra bracket is 
\[
\lb X, Y \rb = XY - (-1)^{\deg(X)\deg(Y)}YX,
\]
with ordinary matrix multiplication in the right hand side.

Denote, as usual, by $e_{ij}$ the matrix with zeros everywhere except a $1$ on position $(i,j)$, where the row and column indices run from $-2n$ to $2n$.
A basis of the Cartan subalgebra $\h$ of $\B(n,n)$ consists of the elements $h_i=e_{2i-1,2i-1}-e_{2i,2i}$ ($i\in[1,n]$) and
$h_{i}=e_{2i,2i}-e_{2i+1,2i+1}$ ($i\in[\bar{n},\bar{1}]$). 
The corresponding dual basis of $\h^*$ will be denoted by $\epsilon_i$ ($i\in[\bar{n},n]^*$).
The following elements are even root vectors with roots $\epsilon_{-i}$ and $-\epsilon_{-i}$ respectively ($i\in[1,n]$):
\begin{align}
&c_{-i}^+\equiv f_{-i}^+= \sqrt{2}(e_{-2i,0}-e_{0,-2i+1}), \nn\\
&c_{-i}^-\equiv f_{-i}^-= \sqrt{2}(e_{0,-2i}-e_{-2i+1,0}), \;
\label{f-as-e}
\end{align}
and odd root vectors with roots $\epsilon_{i}$ and $-\epsilon_{i}$ respectively ($i\in[1,n]$) are given by:
\begin{align}
&c_{i}^+ \equiv b_{i}^+= \sqrt{2}(e_{0,2i}+e_{2i-1,0}), \nn\\
&c_{i}^-\equiv b_{i}^-= \sqrt{2}(e_{0,2i-1}-e_{2i,0}). \; 
\label{b-as-e}
\end{align}
The operators $c_i^+$ are positive root vectors, and the $c_i^-$ are negative root vectors, their $\Z_2$ grading being
\[
\deg(c_i^\pm)=0 \hbox{ for }i\in[\bar{n},\bar{1}], \qquad
\deg(c_i^\pm)=1 \hbox{ for }i\in[1,n].
\]
The remaining root vectors of $\B(n,n)$ are given by elements of the form $\lb c^\xi_i, c^\eta_j \rb$, where $\lb \cdot, \cdot \rb$ is the Lie superalgebra bracket (corresponding to a commutator or anti-commutator).
The factor $\sqrt{2}$ in~\eqref{f-as-e}-\eqref{b-as-e} is introduced because then these operators satisfy the triple relations of parafermion and paraboson creation and annihilation operators (subject to the relative parafermion relations), given in the following theorem~\cite{Palev1}.

\begin{theo}[Palev]
\label{theo1}
As a Lie superalgebra defined by generators and relations, 
$\B(n,n)$ is generated by the elements $c_j^\pm$ ($j\in[\bar{n},n]^*$) subject to the following relations
\begin{align}
& [[f_{ j}^{\xi}, f_{ k}^{\eta}], f_{l}^{\epsilon}]=
|\epsilon -\eta| \delta_{kl} f_{j}^{\xi} - |\epsilon -\xi| \delta_{jl}f_{k}^{\eta}, \qquad (j,k,l\in[\bar{n},\bar{1}];
\label{f-rel}\\
& [\{ b_{ j}^{\xi}, b_{ k}^{\eta}\} , b_{l}^{\epsilon}]= 
(\epsilon -\xi) \delta_{jl} b_{k}^{\eta}  + (\epsilon -\eta) \delta_{kl}b_{j}^{\xi}, \qquad (j,k,l\in[1,n]);
\label{b-rel}\\
& [[f_{ i}^{\xi}, f_{ j}^{\eta}], b_{k}^{\epsilon}]=0,\qquad [\{b_{ k}^{\xi}, b_{ l}^{\eta}\}, f_{i}^{\epsilon}]=0, \nn\\
& [[f_{ j}^{\xi}, b_{ k}^{\eta}], f_{i}^{\epsilon}]= -|\epsilon-\xi| \delta_{ji} b_k^{\eta}, \qquad
\{[f_{ j}^{\xi}, b_{ k}^{\eta}], b_{l}^{\epsilon}\}= (\epsilon-\eta) \delta_{kl} f_j^{\xi}, 
\qquad (i,j\in[\bar{n},\bar{1}],\ k,l\in[1,n]).
\label{fb-rel}
\end{align} 
Herein, $\eta, \epsilon, \xi \in\{+,-\}$, are interpreted as $+1$ and $-1$
in the algebraic expressions $\epsilon -\xi$ and $\epsilon -\eta$.
\end{theo}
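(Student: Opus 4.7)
The plan is to establish Palev's presentation by confirming three things about the explicit operators $c_j^\pm$ defined in \eqref{f-as-e}-\eqref{b-as-e} within the new matrix realization: that they lie in $\B(n,n)$ with the indicated parity, that they satisfy the triple relations \eqref{f-rel}-\eqref{fb-rel}, and that together they generate all of $\B(n,n)$. Once these three facts are in place, the universal aspect of the theorem, namely that no further relations are needed, can be imported from Palev's classical presentation result~\cite{Palev1}, proved originally in the standard matrix realization. The parity claim is handled by inserting the matrix-unit expressions from \eqref{f-as-e}-\eqref{b-as-e} into the defining identities $X^T B + BX = 0$ (even case) and $X^{ST} B - BX = 0$ (odd case). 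Using the block form of $B$, each verification reduces to a short computation in a single $(2\times 2)$ block and simultaneously shows that each $c_j^\pm$ is a weight vector for $\h$ with weight $\pm\epsilon_j$, as asserted after \eqref{f-as-e}-\eqref{b-as-e}.

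For the triple relations, the crucial simplification is that each $c_j^\pm$ is a sum of just two matrix units, so any double bracket $\lb\lb c_j^\xi, c_k^\eta\rb, c_l^\epsilon\rb$ expands, via $e_{ab} e_{cd} = \delta_{bc} e_{ad}$ together with the graded bracket convention, into a short linear combination of matrix units. I would verify \eqref{f-rel}, \eqref{b-rel} and \eqref{fb-rel} by systematic case analysis on the signs $\xi,\eta,\epsilon$ and on the coincidences of the indices $j,k,l$; the factor $\sqrt{2}$ introduced in \eqref{f-as-e}-\eqref{b-as-e} is precisely what produces the integer Kronecker coefficients on the right-hand sides. This verification is mechanical but by far the most laborious part of the argument.

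Generation is the cleanest step: the $c_j^\pm$ already supply root vectors for every $\pm\epsilon_j$ ($j\in[\bar{n},n]^*$); brackets among these produce root vectors for all remaining roots $\pm\epsilon_i\pm\epsilon_j$ of $\B(n,n)$, and brackets of the form $\lb c_j^+, c_j^-\rb$ span the Cartan subalgebra $\h$. With generation and relations both verified in the matrix model, Palev's universal presentation~\cite{Palev1} identifies the abstract Lie superalgebra defined by \eqref{f-rel}-\eqref{fb-rel} with $\B(n,n)$. The main obstacle will be the careful bookkeeping in the case analysis for the triple relations, especially the tracking of signs as the appropriate bracket switches between a commutator and an anticommutator according to the $\Z_2$-grading of the arguments in \eqref{fb-rel}.
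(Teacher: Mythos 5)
Your proposal is correct and matches the paper's treatment: the paper does not prove this theorem in detail but attributes the presentation result to Palev~\cite{Palev1}, merely noting that the explicitly given matrix operators $c_j^\pm$ (with the normalizing factor $\sqrt{2}$) satisfy the triple relations in the new realization and that the remaining root vectors and Cartan elements arise as brackets $\lb c_i^\xi, c_j^\eta\rb$, exactly the generation argument you sketch. Your three-step plan (membership/parity check, mechanical verification of the triple relations via $e_{ab}e_{cd}=\delta_{bc}e_{ad}$, and importing the ``no further relations'' half from Palev's original presentation) is the natural way to flesh this out and contains no gaps.
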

The so-called triple relations~\eqref{f-rel}-\eqref{fb-rel} are important: they combine a system of $n$ pairs of parafermion operators with a system of $n$ pairs of paraboson operators in a particular way, that can be extended when $n$ goes to infinity.

Before turning to representations, let us identify a subalgebra of $\B(n,n)$ that will play an important role.

\begin{prop}
The $4n^2$ elements 
\begin{equation}
\lb c_i^+, c_j^-\rb \qquad(i,j\in[\bar{n},n]^*)
\label{un}
\end{equation} 
are a basis of the subalgebra $\gl(n|n)$.
\end{prop}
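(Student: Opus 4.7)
The strategy has three ingredients: a dimension count, linear independence, and verification of the $\gl(n|n)$ commutation relations via the triple relations of Theorem~\ref{theo1}. Let me write $E_{ij}:=\lb c_i^+, c_j^-\rb$ for $i,j\in[\bar n,n]^*$. The total number of such elements is $(2n)^2=4n^2=\dim\gl(n|n)$, so once I establish linear independence and closure with the correct super-commutators, the identification with $\gl(n|n)$ will follow by matching generators and relations.

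For linear independence, I would use the explicit realizations~\eqref{f-as-e}–\eqref{b-as-e} and the rule $e_{ab}e_{cd}=\delta_{bc}e_{ad}$. A short calculation gives four clean families (all entries outside the $0$-th row/column):
\[
\{b_i^+,b_j^-\}=2(e_{2i-1,2j-1}-e_{2j,2i}),\qquad [f_{-i}^+,f_{-j}^-]=2(e_{-2i,-2j}-e_{-2j+1,-2i+1}),
\]
\[
[f_{-i}^+,b_j^-]=2(e_{-2i,2j-1}-e_{2j,-2i+1}),\qquad [b_j^+,f_{-i}^-]=2(e_{2j-1,-2i}-e_{-2i+1,2j}).
\]
The spurious $e_{0,0}$ terms in the individual products cancel in the (anti)commutators, as they must since $e_{0,0}\notin\B(n,n)$. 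The matrix units appearing in the four families are pairwise distinct, so the $4n^2$ elements $E_{ij}$ are linearly independent and sit in the "outer" part of the matrix~\eqref{X} (the four corner blocks).

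To identify the algebraic structure I would not compute $\lb E_{ij},E_{kl}\rb$ from matrix products directly, but instead exploit the triple relations. Equations~\eqref{f-rel}–\eqref{fb-rel} can be rewritten uniformly as
\[
\lb \lb c_i^+, c_j^-\rb, c_k^\epsilon\rb = 2\,\delta_{jk}\,\delta_{\epsilon,+}\,c_i^+ \;-\; 2\,\delta_{ik}\,\delta_{\epsilon,-}\,c_j^-,
\]
valid in every combination of parafermion/paraboson indices, provided one reads $\lb\cdot,\cdot\rb$ as commutator or anticommutator according to the $\Z_2$-grading. This shows that $E_{ij}$ acts on the $4n$-dimensional "vector space" spanned by $\{c_k^\pm\}$ as a matrix unit. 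Applying the super Jacobi identity a second time and using this action then gives
\[
\lb E_{ij}, E_{kl}\rb \;=\; 2\,\delta_{jk}\,E_{il} \;-\; 2\,(-1)^{(\deg c_i+\deg c_j)(\deg c_k+\deg c_l)}\,\delta_{il}\,E_{kj},
\]
which (after rescaling $E_{ij}\mapsto\tfrac12 E_{ij}$) is precisely the defining bracket of $\gl(n|n)$ in the standard graded basis, with $\deg E_{ij}=\deg c_i+\deg c_j\pmod 2$. In particular, the span is closed under $\lb\cdot,\cdot\rb$.

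The main obstacle is bookkeeping of signs in the last step: since $c_i^\pm$ is even for $i<0$ and odd for $i>0$, the bracket $\lb c_i^+, c_j^-\rb$ is a commutator in some quadrants and an anticommutator in others, so the uniform identity above hides four separate verifications. The triple relations~\eqref{f-rel}–\eqref{fb-rel} are however tailored for exactly this purpose — the constants $|\epsilon-\eta|$ and $(\epsilon-\xi)$ on their right-hand sides collapse to the same $\pm 2\delta$ pattern in every case — so the verification is mechanical case analysis rather than a conceptual difficulty. Combined with linear independence and the dimension match, this yields the isomorphism with $\gl(n|n)$.
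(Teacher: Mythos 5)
Your strategy is essentially the paper's: the entire proof given there is the one-line observation that $E_{ij}=\frac12 \lb c_i^+, c_j^-\rb$ satisfies the defining $\gl(n|n)$ bracket $\lb E_{ij},E_{kl}\rb=\delta_{jk}E_{il}-(-1)^{\deg(E_{ij})\deg(E_{kl})}\delta_{il}E_{kj}$, which is exactly your third step (and your final displayed relation is the correct one, fixing an evident typo in the paper's version). Your explicit matrix computation establishing linear independence is a welcome addition the paper omits. Two minor sign slips, neither fatal: in $[b_j^+,f_{-i}^-]$ the second matrix unit enters with a plus sign, $2(e_{2j-1,-2i}+e_{-2i+1,2j})$, which does not affect linear independence; and your ``uniform'' triple relation is not literally correct in all quadrants --- comparison with~\eqref{paraospi} shows the coefficient of the $\delta_{ik}$ term carries an extra grading-dependent factor $\epsilon^{\langle k\rangle}(-1)^{\langle j\rangle\langle k\rangle}$, so e.g.\ $\lb\lb b_i^+,f_j^-\rb,b_i^-\rb=+2f_j^-$ rather than $-2f_j^-$. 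Since you explicitly defer the signs to a case-by-case check and your final bracket carries the correct super-sign, this is a bookkeeping inaccuracy rather than a gap in the argument.
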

This follows immediately from the fact that the elements
\[
E_{ij}=\frac12 \lb c_j^+, c_k^- \rb
\]
satisfy
\beq
\lb E_{ij}, E_{kl} \rb = \delta_{jk} E_{il} - (-1)^{\deg(E_{ij})\deg(E_{kl})} \delta_{ij} E_{kj}.
\eeq
Note also that 
\begin{equation}
[c_i^+,c_i^-]=2 h_i\quad (i\in[\bar{n},\bar{1}]), \qquad \{c_{i}^+,c_{i}^-\}=2 h_{i}\quad (i\in[1,n]).
\label{Cartan-h}
\end{equation}
Hence $\h=\hbox{span}\{h_i,\ i\in[\bar{n},n]^*\}$, the Cartan subalgebra of $\B(n,n)$, is also the Cartan subalgebra of $\gl(n|n)$.

%%%%%%%%%%%%%%%%%%%%%%%%%%%%%%%%%%%%%%%%%%%%%%%%%%%%%%%%%%%%%%%%%%%%%%%%%%%%%%%
%%%%% section
%%%%%%%%%%%%%%%%%%%%%%%%%%%%%%%%%%%%%%%%%%%%%%%%%%%%%%%%%%%%%%%%%%%%%%%%%%%%%%%
\section{Fock representations of $\B(n,n)$ in the ``odd basis''}
\setcounter{equation}{0} \label{sec:C}

The parastatistics Fock space of order $p$ (for the relative parafermion relations), with $p$ a positive integer, has been constructed before~\cite{SV2015} as an infinite-dimensional lowest weight representation $V(p,n)$ of the algebra $\B(n,n)$.  
By definition~\cite{GM,Palev1} the parastatistics Fock space $V(p,n)$ is the Hilbert space with vacuum vector $|0\rangle$, 
defined by means of 
\begin{align}
& \langle 0|0\rangle=1, \qquad c_{j}^- |0\rangle = 0, 
\qquad (c_j^\pm)^\dagger = c_j^\mp, \qquad
\lb c_j^-, c_k^+ \rb |0\rangle = p\delta_{jk}\, |0\rangle \quad
(j,k\in[\bar{n},n]^*)
\label{Fock}
\end{align}
and which is irreducible  under the action of the algebra $\B(n,n)$ generated by the elements $c_j^\pm$. 

By~\eqref{Cartan-h}, one sees that $|0\rangle$ is the lowest weight vector of $V(p,n)$ with weight $[-\frac{p}{2},\ldots, -\frac{p}{2};\frac{p}{2},\ldots, \frac{p}{2}]$ in the basis $\{\epsilon_{-n},\ldots,\epsilon_{-1};\epsilon_1,\ldots,\epsilon_n\}$.
These representations have been analyzed in~\cite{SV2015}.
The main result is the decomposition with respect to the subalgebra chain $\B(n,n)\supset \gl(n|n)$, because then the Gelfand-Zetlin  basis of the $\gl(n|n)$ representations can be used to label the vectors of $V(p,n)$.
The main difference here, compared to~\cite{SV2015}, is that we will need to use a different GZ-basis for the $\gl(n|n)$ representations, one that is appropriate for letting $n$ grow to infinity.
But of course, the branching $\B(n,n)\supset \gl(n|n)$ is the same, and thus we have~\cite{SV2015}:

\begin{prop}
In the decomposition of $V(p,n)$ with respect to $\B(n,n)\supset \gl(n|n)$, all covariant representations of $\gl(n|n)$ labeled by a partition $\lambda=(\lambda_1,\lambda_2,\ldots)$ appear with multiplicity~1, subject to $\lambda_1\leq p$ and $\lambda_{n+1}\leq n$.
In the basis $\{\epsilon_{-n},\ldots,\epsilon_{-1};\epsilon_1,\ldots,\epsilon_n\}$, the highest weight of the $\gl(n|n)$ covariant representation labeled by $\lambda$ is given by~\cite{JHKR} 
\begin{align}
{}[m]^{2n}& =[m_{-n,2n},\ldots,m_{-2,2n},m_{-1,2n};m_{1,2n},m_{2,2n}, \ldots,m_{n,2n}]\nonumber\\
& = [m_{\bar n,2n},\ldots,m_{\bar 2,2n},m_{\bar 1,2n};m_{1,2n},m_{2,2n}, \ldots,m_{n,2n}],
\label{components}
\end{align}
where
\begin{align}
& m_{-i,2n}= \lambda_{n-i+1}, \qquad (i\in[1,n]) \label{mneg}\\
& m_{i,2n} = \max \{ 0, {\lambda_i'}-n \}, \qquad (i\in [1,n]), \label{mpos}
\end{align}
and $\lambda'$ is the partition conjugate to $\lambda$ (for partitions we follow the standard notations of~\cite{Macdonald}).
\end{prop}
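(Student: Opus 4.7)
The plan is to realize $V(p,n)$ as the unitarizable irreducible quotient of a parabolic Verma module for a parabolic subalgebra of $\B(n,n)$ with Levi $\gl(n|n)$, and then to decompose that parabolic Verma module as a $\gl(n|n)$-module via a super-Cauchy/Howe identity.

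First, I would install the $\Z$-grading $\g = \g_{-2}\oplus\g_{-1}\oplus\g_0\oplus\g_1\oplus\g_2$ with $\g_0=\gl(n|n)$ (by the preceding Proposition), $\g_{\pm 1}=\mathrm{span}\{c_i^\pm:i\in[\bar n,n]^*\}$ and $\g_{\pm 2}=\lb\g_{\pm 1},\g_{\pm 1}\rb$. The relation $\lb\g_1,\g_2\rb=0$ follows because $\B(n,n)$ admits no roots of the form $\epsilon_i+\epsilon_j+\epsilon_k$, so $\g^+:=\g_1\oplus\g_2$ is a two-step nilpotent ``super-Heisenberg'' algebra. The vacuum $|0\rangle$ is annihilated by $\g^-:=\g_{-1}\oplus\g_{-2}$ by~\eqref{Fock} and carries the claimed weight $[-\tfrac{p}{2},\ldots,-\tfrac{p}{2};\tfrac{p}{2},\ldots,\tfrac{p}{2}]$ under $\g_0$; since this weight vanishes on the semisimple part of $\gl(n|n)$ it integrates to a $1$-dimensional $\g^{\le 0}$-module $\C_p$, and $V(p,n)$ is an irreducible quotient of the parabolic Verma module $M(p)=U(\g)\otimes_{U(\g^{\le 0})}\C_p$. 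By PBW, $M(p)\cong U(\g^+)\otimes\C_p$ as a $\gl(n|n)$-module.

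Second, I would decompose $M(p)$ as a $\gl(n|n)$-module. Under $\gl(n|n)$, $\g_1$ is the natural module $V$ (even part spanned by the even root vectors $c_{\bar i}^+$, odd part by the odd root vectors $c_i^+$), while $\g_2$ is the super-antisymmetric square $\Lambda^2_s V$. Because $\g^+$ is $2$-step nilpotent, the associated graded of $U(\g^+)$ coincides with the super-symmetric algebra $S^\bullet(\g^+)$ as a $\gl(n|n)$-module, and a super-Cauchy/Howe identity then delivers
\begin{equation}
U(\g^+)\;\cong_{\gl(n|n)}\;\bigoplus_{\lambda} V^{\gl(n|n)}_\lambda,
\end{equation}
the sum ranging over all partitions $\lambda$ satisfying the covariant hook condition $\lambda_{n+1}\le n$, each appearing with multiplicity one and with highest weight given by~\eqref{mneg}--\eqref{mpos}. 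Tensoring with $\C_p$ merely shifts the highest weight by the constant $[-\tfrac p2,\ldots;\tfrac p2,\ldots]$.

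Finally, to cut $M(p)$ down to $V(p,n)$ I would use the contravariant Hermitian form on $M(p)$ determined by $(c_j^\pm)^\dagger=c_j^\mp$ and~\eqref{Fock}: by irreducibility, $V(p,n)=M(p)/\mathrm{rad}(\langle\cdot,\cdot\rangle)$. Picking a $\gl(n|n)$-highest weight vector in each isotypic component $V^{\gl(n|n)}_\lambda$ and evaluating its norm by iterated use of~\eqref{f-rel}--\eqref{fb-rel} together with~\eqref{Cartan-h} produces a product of factors depending on $p$ and on the row-lengths of $\lambda$; tracking the sign of that product shows the norm is strictly positive precisely when $\lambda_1\le p$, and vanishes otherwise. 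The main obstacle is exactly this Shapovalov-type computation: one must identify the correct highest weight vector in each isotypic component and then carry out a careful iterative commutator calculation to extract the factorization forcing the cutoff $\lambda_1\le p$. The super-Cauchy decomposition in the second step is more standard but still needs care in the $\Z_2$-graded setting, while the $\Z$-grading setup in the first step is essentially routine.
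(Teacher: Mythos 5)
The paper itself does not prove this proposition: it is imported wholesale from~\cite{SV2015}, and the argument given there is essentially the one you outline — realize $V(p,n)$ as the irreducible unitary quotient of the module induced from the one-dimensional $\gl(n|n)$-module of weight $[-\tfrac p2,\ldots;\tfrac p2,\ldots]$ across the $\Z$-grading $\g_{-2}\oplus\cdots\oplus\g_2$, expand the character of $U(\g_1\oplus\g_2)$ by the super-Littlewood/Cauchy identity into supersymmetric Schur functions $s_\lambda(x|y)$ supported on the $(n,n)$-hook $\lambda_{n+1}\le n$ with multiplicity one, and obtain the cutoff $\lambda_1\le p$ from the vanishing of the contravariant form (visible in the factors of the reduced matrix elements $G_k$). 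Your sketch is therefore correct and takes the same route; the one point worth making explicit is that passing from the character identity for $U(\g^+)\otimes\C_p$ to an actual multiplicity-free direct-sum decomposition requires the semisimplicity of the category of polynomial (covariant tensor) $\gl(n|n)$-modules (Berele--Regev/Sergeev), since for Lie superalgebras a character identity alone does not rule out non-split extensions.
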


Note that~\eqref{mneg}-\eqref{mpos} implies
\begin{equation}
m_{i,2n}-m_{i+1,2n}\in {\mathbb Z}_+, \qquad(i\in[\bar{n},\bar{2}]\cup[1,n-1])
\label{cond1}
\end{equation}
and 
\begin{equation}
m_{-1,2n}\geq \# \{i:m_{i,2n}>0,\; i\in[1,n] \}. \label{cond2}
\end{equation}

An appropriate basis for the vectors of these covariant $\gl(n|n)$ modules has been given in~\cite{SV2016}, and is referred to as ``the odd GZ-basis.''
This corresponds to a decomposition according to the subalgebra chain
\begin{equation}
\mathfrak{gl}(n|n) \supset \mathfrak{gl}(n|n-1) \supset \mathfrak{gl}(n-1|n-1) \supset \mathfrak{gl}(n-1|n-2) \supset \mathfrak{gl}(n-2|n-2) \supset 
\cdots \supset \mathfrak{gl}(1|1) \supset \mathfrak{gl}(1).
\label{chain}
\end{equation}
A particular feature of this odd GZ-basis, is that it can be extended~\cite{SV2016} for $n\rightarrow\infty$, which is not the case for the more conventional GZ-basis of $\gl(n|n)$~\cite{CGC,M,GIW1,GIW2}.
Combining the results of~\cite{SV2016} with the previous proposition, one has

\begin{prop}
For any positive integer $p$, a basis of the Fock representation $V(p,n)$ of $\B(n,n)$ is given by the set of vectors of the following form:
\beq
 |p;m)^{2n} \equiv |m)^{2n} = \left| \begin{array}{l} [m]^{2n} \\[2mm] |m)^{2n-1} \end{array} \right)= \hspace{5cm} 
\label{mn}
\eeq
\begin{equation*}
 \left|
\begin{array}{llcll:llclll}
m_{\bar n,2n} & m_{\overbar{n-1},2n} & \cdots & m_{\bar 2,2n} & m_{\bar 1,2n} & m_{1,2n} & m_{2,2n} &\cdots & m_{n-2,2n} &m_{n-1,2n} &m_{n,2n}\\
 \uparrow & \uparrow & \cdots & \uparrow &\uparrow & &&&&&\\
m_{\bar n, 2n-1} & m_{\overbar{n-1}, 2n-1} & \cdots & m_{\bar 2, 2n-1} & m_{\bar 1, 2n-1} & m_{1,2n-1} & m_{2,2n-1} &\cdots &m_{n-2,2n-1} & m_{n-1,2n-1} &\\
&&&&&\downarrow & \downarrow & \cdots & \downarrow &\downarrow  \\
 & m_{\overbar{n-1},2n-2} & \cdots & m_{\bar 2,2n-2} & m_{\bar 1,2n-2} & m_{1,2n-2} & m_{2,2n-2} &\cdots & m_{n-2,2n-2} & m_{n-1,2n-2} &\\
 &\uparrow & \cdots & \uparrow &\uparrow &&&&\\
 & m_{\overbar{n-1},2n-3} & \cdots & m_{\bar 2,2n-3} & m_{\bar 1,2n-3} & m_{1,2n-3} & m_{2,2n-3} &\cdots & m_{n-2,2n-3}  &  &\\
 &  &\ddots &\vdots & \vdots & \vdots &\vdots &\iddots & & \\
&&& m_{\bar 2 4} &  m_{\bar 1 4} & m_{14} & m_{24}& & & & \\
&&&\uparrow & \uparrow \\
&&& m_{\bar 2 3} &  m_{\bar 1 3} & m_{13} && & & & \\
&&&&&\downarrow \\
&&&& m_{\bar 1 2} & m_{12} & & & & & \\
&&&& \uparrow\\
&&&& m_{\bar 1 1}  & & & & & &
\end{array}
\right)
\end{equation*}
where all $m_{ij}\in\Z_+$, satisfying $m_{\bar{n},2n}\leq p$ and the GZ-conditions
\begin{equation}
 \begin{array}{rl}
1.& m_{j,2n}-m_{j+1,2n}\in{\mathbb Z}_+ , \;j\in[\bar{n},\bar{2}]\cup[1,n] \hbox{ and }
     m_{-1,2n}\geq \# \{i:m_{i,2n}>0,\; i\in[1,n] \};\\
2.& m_{-i,2s}-m_{-i,2s-1}\equiv\theta_{-i,2s-1}\in\{0,1\},\quad 1\leq i \leq s\leq n ;\\
3.& m_{i,2s}-m_{i,2s+1}\equiv\theta_{i,2s}\in\{0,1\},\quad 1\leq i\leq s\leq n-1 ;\\    
4.& m_{-1,2s}\geq \# \{i:m_{i,2s}>0,\; i\in[1,s] \}, \; s\in[1,n] ;\\  
5.& m_{-1,2s-1}\geq \# \{i:m_{i,2s-1}>0,\; i\in[1,s-1] \}, \; s\in[2,n] ;\\ 
6.& m_{i,2s}-m_{i,2s-1}\in{\mathbb Z}_+\hbox{ and } m_{i,2s-1}-m_{i+1,2s}\in{\mathbb Z}_+,\quad 1\leq i\leq s-1\leq n-1;\\
7.& m_{-i-1,2s+1}-m_{-i,2s}\in{\mathbb Z}_+\hbox{ and } m_{-i,2s}-m_{-i,2s+1}\in{\mathbb Z}_+,\quad 1\leq i\leq s\leq n-1.
 \end{array}
\label{cond3}
\end{equation}
\end{prop}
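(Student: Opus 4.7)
The plan is to combine the branching rule stated in the preceding proposition with the odd Gelfand--Zetlin basis for covariant $\gl(n|n)$-modules constructed in~\cite{SV2016}; the assertion is essentially a labelling statement and the proof is translational rather than computational.

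First, the preceding proposition decomposes $V(p,n)$ multiplicity-free as a $\gl(n|n)$-module, indexed by partitions $\lambda$ with $\lambda_1\le p$ and $\lambda_{n+1}\le n$, the summand attached to $\lambda$ being the covariant module of highest weight $[m]^{2n}$ given by~\eqref{mneg}--\eqref{mpos}. Producing a basis for each summand and taking the union yields a basis of $V(p,n)$, so the problem reduces to two checks: (i)~that condition~1 of~\eqref{cond3} together with $m_{\bar n,2n}\le p$ enumerates exactly the admissible top rows $[m]^{2n}$; (ii)~that conditions 2--7 of~\eqref{cond3} are exactly the defining inequalities of the odd GZ-basis along the chain~\eqref{chain}.

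For~(i) I would unwind~\eqref{mneg}--\eqref{mpos}: $m_{\bar i,2n}=\lambda_{n-i+1}$ converts the partition ordering into the required monotonicity of the negative labels and turns $\lambda_1\le p$ into $m_{\bar n,2n}\le p$, while $m_{i,2n}=\max\{0,\lambda_i'-n\}$ gives monotonicity of the positive labels and, through $\lambda_{n+1}\le n$, forces them to vanish beyond position~$n$. The mixed inequality $m_{-1,2n}\ge \#\{i\in[1,n]:m_{i,2n}>0\}$ reduces to $\lambda_n\ge \min(n,\lambda_{n+1})=\lambda_{n+1}$, which is automatic. Conversely, any tuple satisfying condition~1 and $m_{\bar n,2n}\le p$ arises from a unique admissible~$\lambda$. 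For~(ii), the odd GZ-basis of a covariant $\gl(n|n)$-module with fixed top row, built along~\eqref{chain} in~\cite{SV2016}, consists of triangular arrays extending the top row subject to exactly the listed inequalities: conditions~2 and~3 are the $\{0,1\}$-valued branching laws for the odd reductions $\gl(s|s)\supset\gl(s|s-1)$ and $\gl(s|s-1)\supset\gl(s-1|s-1)$; conditions~4 and~5 enforce covariance at each intermediate level; conditions~6 and~7 are the classical $\gl$-type interlacing inequalities for the two kinds of even reductions. Non-negativity of every entry is then automatic from the non-negativity of the top row.

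The principal technical obstacle is step~(ii): one must verify that the asymmetric conditions~2--7 really do reproduce the betweenness laws dictated by the odd chain~\eqref{chain}, and in particular that the covariance conditions~4 and~5 propagate correctly through every reduction while the odd $\theta$-shifts at successive levels remain independently $\{0,1\}$-valued. This is precisely the content of the odd GZ construction in~\cite{SV2016}, so once one aligns the indexing conventions of the present pattern with those used there, no further work is required; the remaining steps are combinatorial bookkeeping.
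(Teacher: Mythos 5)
Your proposal is correct and follows essentially the same route as the paper, which proves this proposition simply by combining the multiplicity-free branching $\B(n,n)\supset\gl(n|n)$ of the preceding proposition with the odd GZ-basis of covariant $\gl(n|n)$-modules constructed in~\cite{SV2016}. Your verification in step~(i) that condition~1 together with $m_{\bar n,2n}\le p$ exactly enumerates the admissible top rows (including the reduction of the mixed inequality to $\lambda_n\ge\lambda_{n+1}$) is a useful piece of bookkeeping that the paper leaves implicit.
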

Conditions~2 and~3 are referred to as ``$\theta$-conditions''.
Conditions~6 and~7 are often referred to as ``betweenness conditions.''
Conditions~1, 4 and 5 are related to~\eqref{cond2} (or~\eqref{mpos}), and assure that each row of~\eqref{mn} corresponds to the highest weight of a covariant representation of $\gl(t|t)$ or $\gl(t|t-1)$ in the chain~\eqref{chain}.
Note that the arrows in this pattern have no real function, and can be omitted. 
We find it useful to include them, just in order to visualize the conditions.
When there is an arrow $a\rightarrow b$ between labels $a$ and $b$, it means that either $b=a$ or else $b=a+1$ (a $\theta$-condition).
We will also refer to ``rows'' and ``columns'' of the GZ-pattern.
Rows are counted from the bottom: row~1 is the bottom row in~\eqref{mn}, and row~$2n$ is the top row in~\eqref{mn}.
In an obvious way, columns~$1$, $2$, $3\cdots$ refer to the columns to the right of the dashed line in~\eqref{mn},
and columns~$-1$, $-2$, $-3,\cdots$ (or $\bar 1$, $\bar 2$, $\bar 3,\cdots$) to the columns to the left of this dashed line.
For two consecutive rows in the GZ-pattern~\eqref{mn}, about half of the labels involve $\theta$-conditions, and the other half involves betweenness conditions.

The top row of~\eqref{mn} corresponds to the highest weight of a $\gl(n|n)$ covariant representation, according to~\eqref{cond1}-\eqref{cond2}.
The action of a set of $\gl(n|n)$ generators $E_{ij}$ on such basis vectors of a covariant representation has been determined in~\cite{SV2016}. 
For future reference, let us recall the action of the elements of the Cartan subalgebra of $\gl(n|n)$ or $\B(n,n)$, which involves now the label $p$ since the $E_{ii}$'s have a nonzero action on the vacuum (i.e.\ the zero GZ-pattern):
\begin{align}
& E_{-i,-i}|m)^{2n}=\left(-\frac{p}{2}+\sum_{j\in[-i,i-1]^*} m_{j,2i-1}-\sum_{j\in[-i+1,i-1]^*} m_{j,2i-2}\right)|m)^{2n}, 
\quad i\in[1,n]; \label{E-ii}\\
& E_{ii}|m)^{2n}=\left(\frac{p}{2}+\sum_{j\in[-i,i]^*} m_{j,2i}-\sum_{j\in[-i,i-1]^*} m_{j,2i-1}\right)|m)^{2n}, 
\quad i\in[1,n]. \label{Eii}
\end{align}
Here, we want to go beyond the action of the $E_{ij}$'s, and describe the action of the generators $c_i^\pm$ of $\B(n,n)$, i.e.\ the parastatistics operators, on the basis vectors~\eqref{mn}.
For the parastatistics creation operators, one can check that
\beq
\lb E_{ij}, c^+_k \rb = \delta_{jk} c^+_i,
\eeq
so the set $\{c^+_i | i\in[\bar{n},n]^*\}$ forms a standard $\gl(n|n)$ tensor of rank $[1,0,\ldots,0]$ (which is the highest weight of the standard covariant representation).
This means that to every $c_i^+$ one can attach a unique GZ-pattern of the form~\eqref{mn} with top line $1 0\ldots 0$.
Following the $\gl(n|n)$ actions on GZ-patterns (in particular, \cite[(3.15)-(3.16)]{SV2016}), it is easy to see that the $2n$ elements $(c^+_n, c^+_{-n}, \cdots, c^+_2, c^+_{-2},c^+_1, c^+_{-1})$ correspond, in this order, to a GZ-pattern of type~\eqref{mn} consisting of $k$ top rows of the form $1 0 \cdots 0$ and $2n-k$ bottom rows of the form $0 \cdots 0$ for $k=1,2,\ldots,2n$. 
For example, for $n=3$:
\begin{align}
& c^+_3 : \left|
\begin{array}{ccc:ccc}
 1&0&0&0&0&0 \\
 0&0&0&0&0&  \\
  &0&0&0&0&  \\
  &0&0&0& &  \\
  & &0&0& &  \\
  & &0& & &  
\end{array}
\right),
\qquad
c^+_{-3} : \left|
\begin{array}{ccc:ccc}
 1&0&0&0&0&0 \\
 1&0&0&0&0&  \\
  &0&0&0&0&  \\
  &0&0&0& &  \\
  & &0&0& &  \\
  & &0& & &  
\end{array}
\right),
\qquad
c^+_{2} : \left|
\begin{array}{ccc:ccc}
 1&0&0&0&0&0 \\
 1&0&0&0&0&  \\
  &1&0&0&0&  \\
  &0&0&0& &  \\
  & &0&0& &  \\
  & &0& & &  
\end{array}
\right), \nn \\
& c^+_{-2} : \left|
\begin{array}{ccc:ccc}
 1&0&0&0&0&0 \\
 1&0&0&0&0&  \\
  &1&0&0&0&  \\
  &1&0&0& &  \\
  & &0&0& &  \\
  & &0& & &  
\end{array}
\right),
\qquad
c^+_{1} : \left|
\begin{array}{ccc:ccc}
 1&0&0&0&0&0 \\
 1&0&0&0&0&  \\
  &1&0&0&0&  \\
  &1&0&0& &  \\
  & &1&0& &  \\
  & &0& & &  
\end{array}
\right),
\qquad
c^+_{-1} : \left|
\begin{array}{ccc:ccc}
 1&0&0&0&0&0 \\
 1&0&0&0&0&  \\
  &1&0&0&0&  \\
  &1&0&0& &  \\
  & &1&0& &  \\
  & &1& & &  
\end{array}
\right).
\label{cGZ}
\end{align}
It will be useful to express this correspondence by means of a row function:
\beq
\rho(i)= \left\{ \begin{array}{rcl}
 {2i} &  \hbox{for} &  i\in[1,n]  \\ 
 {-2i-1} &  \hbox{for} &  i\in[\bar{n},\bar{1}]
 \end{array}\right. .
\label{rho}
\eeq
Then the pattern corresponding to $c^+_i$ has rows of the form $1 0 \cdots 0$ for each row index $j\in[\rho(i),2n]$ and zero rows for each row index $j\in[1,\rho(i)-1]$.  

The tensor product rule for covariant representations of $\gl(n|n)$ is well known~\cite{King1990}.
When the $\gl(n|n)$ representation with highest weight $[m]^{2n}$ is denoted by $W([m]^{2n})$, it reads
\begin{equation}
W([1,0,\ldots,0]) \otimes W([m]^{2n}) = \bigoplus_{k\in[-n,n]^*} W([m]_{+(k)}^{2n}), \label{tensprod}
\end{equation}
where in general $[m]_{\pm(k)}^{2n}$ is obtained from $[m]^{2n}$ by the replacement of $m_{k,2n}$ by $m_{k,2n}\pm 1$.
On the right hand side of~(\ref{tensprod}) the summands for which the conditions~(\ref{cond1})-(\ref{cond2}) are not fulfilled are omitted. 

By standard analysis~\cite{Vilenkin,paraboson}, the matrix elements of $c_i^+$ in $V(p,n)$ can be written as follows:
\begin{align}
{}^{2n}(m' | c_i^+ | m )^{2n} & = 
\left( \begin{array}{ll} [m]^{2n}_{+(k)} \\[1mm] |m')^{2n-1} \end{array} \right| c_i^+
\left| \begin{array}{ll} [m]^{2n} \\[1mm] |m)^{2n-1} \end{array} \right) \nn\\
& = \left( 
\begin{array}{c}1 0 \cdots 0 0\\[-1mm] 1 0 \cdots 0\\[-1mm] \cdots\\[-1mm] 0 \end{array} ;
\begin{array}{ll} [m]^{2n} \\[2mm] |m)^{2n-1} \end{array}  \right.
\left| 
\begin{array}{ll} [m]^{2n}_{+(k)} \\[2mm] |m')^{2n-1} \end{array} \right)
\times
([m]^{2n}_{+(k)}||c^+||[m]^{2n}).
\label{mmatrix}
\end{align}
The GZ-pattern with 0's and 1's is of course the one corresponding to $c^+_i$, as described earlier.
The first factor in the right hand side of~\eqref{mmatrix} is a $\gl(n|n)$ Clebsch-Gordan coefficient (CGC), where all patterns are of the form~\eqref{mn}.
These CGC's will be determined and given in Appendix~\ref{A}.
From the computational point of view, this is the main contribution of the current paper.
The second factor in~\eqref{mmatrix} is a {\em reduced matrix element} for the standard representation.
The possible values of the patterns $|m')^{2n}$ are determined by the $\gl(n|n)$ tensor product rule and the first line of $|m')^{2n}$
is of the form $[m]^{2n}_{+(k)}$.

It is important to realize that the reduced matrix elements depend only upon the $\gl(n|n)$ highest weights $[m]^{2n}$ and $[m]^{2n}_{+k}$ (and not on the type of GZ basis that is being used.)
These reduced matrix elements have actually been determined in~\cite[Proposition~4]{SV2015}:
\begin{align}
& ([m]^{2n}_{+(k)}||c^+||[m]^{2n}) = 
G_{n+k+1}(m_{-n,2n},m_{-n+1,2n},\ldots,m_{-1,2n},m_{1,2n},\ldots,m_{2n,2n}),\quad (k\in[-n,-1]) \\
& ([m]^{2n}_{+(k)}||c^+||[m]^{2n}) = 
G_{n+k}(m_{-n,2n},m_{-n+1,2n},\ldots,m_{-1,2n},m_{1,2n},\ldots,m_{2n,2n}),\quad (k\in[1,n]).
\end{align}
 
For the matrix elements of $c_i^-$, we use the Hermiticity requirement~\eqref{Fock}, 
\beq
{}^{2n}(m'|c^-_i|m)^{2n} = {}^{2n}(m|c^+_i|m')^{2n}.
\label{herm}
\eeq
So in this way we obtain explicit actions of the $\B(n,n)$ generators $c_i^\pm$ on a basis of $V(p,n)$:
\begin{align}
c_i^+|m)^{2n} & = \sum_{k,m'} 
\left( 
\begin{array}{c}1 0 \cdots 0 0\\[-1mm] 1 0 \cdots 0\\[-1mm] \cdots\\[-1mm] 0 \end{array} ;
\begin{array}{ll} [m]^{2n} \\[2mm] |m)^{2n-1} \end{array}  \right.
\left| 
\begin{array}{ll} [m]^{2n}_{+(k)} \\[2mm] |m')^{2n-1} \end{array} \right)
([m]^{2n}_{+(k)}||c^+||[m]^{2n}) 
\left| \begin{array}{ll} [m]^{2n}_{+(k)} \\[1mm] |m')^{2n-1} \end{array} \right), \label{cj+r}\\
c_i^-|m)^{2n} & = \sum_{k,m'} 
\left( 
\begin{array}{c}1 0 \cdots 0 0\\[-1mm] 1 0 \cdots 0\\[-1mm] \cdots\\[-1mm] 0 \end{array} ;
\begin{array}{ll} [m]^{2n}_{-(k)} \\[2mm] |m')^{2n-1} \end{array}  \right.
\left| 
\begin{array}{ll} [m]^{2n} \\[2mm] |m)^{2n-1} \end{array} \right)
([m]^{2n}||c^+||[m]^{2n}_{-(k)}) 
\left| \begin{array}{ll} [m]^{2n}_{-(k)} \\[1mm] |m')^{2n-1} \end{array} \right). \label{cj-r}
\end{align}

%%%%%%%%%%%%%%%%%%%%%%%%%%%%%%%%%%%%%%%%%%%%%%%%%%%%%%%%%%%%%%%%%%%%%%%%%%%%%%%
%%%%% section
%%%%%%%%%%%%%%%%%%%%%%%%%%%%%%%%%%%%%%%%%%%%%%%%%%%%%%%%%%%%%%%%%%%%%%%%%%%%%%%
\section{Properties of the Fock representations $V(p,n)$}
\setcounter{equation}{0} \label{sec:D}

Before turning to the case where $n$ goes to infinity, it is useful to collect some properties of the actions of the parastatistics creation and annihilation operators $c_i^\pm$ on basis vectors $|m)^{2n}$ of $V(p,n)$.
For this purpose, let us write~\eqref{cj+r}-\eqref{cj-r} as
\begin{align}
& c_i^+|m)^{2n} = \sum_{m'} C^+\left[i,|m)^{2n},|m')^{2n}\right]\; |m')^{2n} \label{C+} ,\\
& c_i^-|m)^{2n} = \sum_{m'} C^-\left[i,|m)^{2n},|m')^{2n}\right]\; |m')^{2n} \label{C-} .
\end{align}
The coefficients $C^\pm\left[i,|m)^{2n},|m')^{2n}\right]$ are just a shorthand notation for the expressions in~\eqref{cj+r}-\eqref{cj-r}, and all parts of these expressions (CGC's and reduced matrix elements) are explicitly known.

One can think of the GZ-vectors of $V(p,n)$ as follows.
The vacuum vector $|0\rangle$ is the GZ basis vector with all zeros: $|0\rangle = |p;0)^{2n} \equiv |0)^{2n}$. 
For example, for $n=3$,
\beq
|0\rangle = \left|
\begin{array}{ccc:ccc}
 0&0&0&0&0&0 \\
 0&0&0&0&0&  \\
  &0&0&0&0&  \\
  &0&0&0& &  \\
  & &0&0& &  \\
  & &0& & &  
\end{array}
\right).
\label{00}
\eeq
The creation operators $c_i^+$ ($i\in[\bar{n},n]^*$) have the effect of increasing certain entries in the GZ-pattern, and the annihilation operators $c_i^-$ ($i\in[\bar{n},n]^*$) decrease certain entries in the GZ-pattern (which is why $|0)^{2n}$ is the vacuum vector).

Let us concentrate on the action of the creation operators.
From the properties of the CGC's (see Appendix), we have immediately the following:
\begin{prop}
Let $|m)^{2n}$ be a basis vector with a valid GZ-pattern (i.e.\ satisfying~\eqref{cond3}).
Then the only patterns $|m')^{2n}$ appearing in the right hand side of 
\[
 c_i^+|m)^{2n} = \sum_{m'} C^+\left[i,|m)^{2n},|m')^{2n}\right]\; |m')^{2n} \quad (i\in[\bar{n},n]^*)
\]
are valid GZ-patterns $|m')^{2n}$ such that
\begin{align}
& [m']^j = [m]^j \hbox{ for } j\in[1,\rho(i)-1], \nn\\
& [m']^j = [m]^j+[0,\ldots,0,1,0,\ldots,0] \hbox{ for } j\in[\rho(i),2n].
\label{changes}
\end{align}
\label{prop5}
\end{prop}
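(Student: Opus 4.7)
The plan is to exploit the factorization of the $\gl(n|n)$ Clebsch--Gordan coefficient appearing in the right-hand side of~\eqref{cj+r} along the subalgebra chain~\eqref{chain}, together with the explicit form of the GZ-pattern associated with $c_i^+$ displayed in~\eqref{cGZ}. Along such a chain, a CGC factorizes as a product of reduced CGCs (one per step of the chain), and the support of each factor is controlled by the branching/tensor product rule at that level. Hence the nonvanishing condition for the full CGC can be read off row-by-row from the pattern.

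The pattern attached to $c_i^+$ has two types of rows: for every row index $j\in[1,\rho(i)-1]$ the row is identically zero, while for every $j\in[\rho(i),2n]$ the row is of the form $[1,0,\ldots,0]$. At each level $j$ of the chain, the corresponding step-CGC couples the row of the $c_i^+$-pattern at that level with $[m]^j$ to produce $[m']^j$. For a zero row, the corresponding $\gl$-module is the trivial one-dimensional module, whose tensor product with any covariant module reproduces that module, so the relevant factor forces $[m']^j=[m]^j$. For a $[1,0,\ldots,0]$ row, the module at that level is the standard covariant representation, and the tensor product rule~\eqref{tensprod} restricts $[m']^j$ to be of the form $[m]^j+[0,\ldots,0,1,0,\ldots,0]$, i.e.\ a single entry of the row increased by one.

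Combining these row-by-row restrictions up and down the chain yields exactly the conditions~\eqref{changes}, keeping in mind that any $|m')^{2n}$ appearing with nonzero coefficient in~\eqref{cj+r} must itself be a valid GZ-pattern, which is automatic from the tensor-product decomposition. The main obstacle is to verify that the row-by-row factorization is genuinely available in the odd GZ-basis, and in particular that the zero-row factor indeed forces strict equality rather than permitting additional shifts. Both points are settled by the explicit CGC formulas gathered in Appendix~\ref{A}: inspection there shows that every step-CGC is proportional to a product of $\theta$-delta symbols or to shift factors whose support is precisely the one dictated by the step-level tensor product rule~\eqref{tensprod}. Once that is in hand, the conclusion follows by induction on the row index.
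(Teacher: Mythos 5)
Your proposal is correct and follows essentially the same route as the paper, which deduces the proposition directly from the support properties of the odd-basis Clebsch--Gordan coefficients established in Appendix~\ref{A}: the factorization into isoscalar factors along the chain~\eqref{chain} combined with the shape of the pattern $|1_{\rho(i)})$ attached to $c_i^+$. If anything, your level-by-level tensor-product justification is slightly more complete than the paper's remark that the constraints are immediate from the Cartan actions~\eqref{E-ii}--\eqref{Eii}, since weight matching alone only fixes the row sums of $|m')^{2n}$ rather than the individual entries.
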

In other words, there are no changes in the entries of rows $1,2,\ldots,\rho(i)-1$. 
And in rows $\rho(i),\ldots,2n$ there is a change by one unit for just one particular column index $s$.
The increase can be in any possible column, as long as the remaining pattern is still valid, i.e.\ as long as~\eqref{cond3} is satisfied.
For example, for $n=3$ the action of $c_i^+$ on $|0\rangle = |0)^{6}$ will give just one vector, namely the corresponding one appearing in~\eqref{cGZ}.
Then, e.g., $c^+_{-3}c^+_{-2} |0)^{6}$ will give a linear combination of only two vectors, namely
\beq
\left|
\begin{array}{ccc:ccc}
 2&0&0&0&0&0 \\
 2&0&0&0&0&  \\
  &1&0&0&0&  \\
  &1&0&0& &  \\
  & &0&0& &  \\
  & &0& & &  
\end{array}
\right)
\quad \hbox{and} \quad
\left|
\begin{array}{ccc:ccc}
 1&1&0&0&0&0 \\
 1&1&0&0&0&  \\
  &1&0&0&0&  \\
  &1&0&0& &  \\
  & &0&0& &  \\
  & &0& & &  
\end{array}
\right),
\label{example1}
\eeq
because all other ways of adding 1's to row 5 and 6 (by the action of $c^+_{-3}$) of the pattern
\beq
c^+_{-2} |0\rangle = \sqrt{p} \left|
\begin{array}{ccc:ccc}
 1&0&0&0&0&0 \\
 1&0&0&0&0&  \\
  &1&0&0&0&  \\
  &1&0&0& &  \\
  & &0&0& &  \\
  & &0& & &  
\end{array}
\right)
\label{example2}
\eeq
give rise to nonvalid GZ-patterns.
Note that we have implicitly assumed that $p\geq 2$, since for $p=1$ the first vector in~\eqref{example1} is zero.

For each row $k$ of a GZ basis vector $|m)^{2n}$, let $|m_k|$ be the sum of all entries in row $k$, i.e.
\beq
|m_{2k}|=\sum_{i\in[-k,k]^*} m_{i,2k}, \qquad |m_{2k+1}|=\sum_{i\in[-k-1,k]^*} m_{i,2k+1}.
\label{m-wt}
\eeq
We will sometimes refer to $|m_k|$ as the weight of row~$k$.
By~\eqref{changes}, one has:
\beq
|m_{1}|\leq |m_{2}|\leq |m_{3}|\leq \cdots \leq |m_{2n-1}|\leq |m_{2n}|.
\label{ineq}
\eeq

Consider row~$s$ for a general GZ pattern $|m)^{2n}$:
\[
[\ldots,m_{\bar{2},s}, m_{\bar{1},s} ; m_{1,s}, m_{2,s},\ldots].
\]
By the betweenness conditions in~\eqref{cond3}, it follows that both parts $(\ldots,m_{\bar{2},s}, m_{\bar{1},s})$ and
$(m_{1,s}, m_{2,s},\ldots)$ are partitions~\cite{Macdonald}, since they consist of non-decreasing non-negative integers.
We shall refer to these partitions as the left and right part of row~$s$.
Furthermore, if $m_{\bar{1},s}=0$, it follows from conditions 4 and 5 in~\eqref{cond3} that the right part of row $s$ must be zero, i.e.\ in that case row $s$ is of the form
\beq
[\ldots,m_{\bar{3},s}, m_{\bar{2},s},0 ; 0,0,\ldots]=[\nu_1,\nu_2,\ldots,0;0,0,\ldots],
\eeq
with $\nu$ some partition. 
The following definition will be crucial:
\begin{defi}
The pattern, or equivalently the associated basis vector, $|m)^{2n}$ is row-stable with respect to row $s$ if there exists a partition $\nu$ such that all rows $s,s+1,\ldots,2n$ are of the form
\[
[\nu_1,\nu_2,\ldots,0;0,0,\ldots].
\]
In that case, $s$ is called a stability index of $|m)^{2n}$.
\label{defi6}
\end{defi}
Note that the length of these rows increases as the row index increases, but the increase is only by adding extra zeros in the left and right part of the row.
For example, \eqref{example2} is row-stable with respect to row~3, and the patterns in~\eqref{example1} are row-stable with respect to row~5.

Consider now the consecutive action of a number of $c^+_i$'s, and suppose $n$ is sufficiently large. We have
\begin{prop}
Let $k<n$, then all basis vectors appearing in 
\beq
c^+_{i_k} \cdots c^+_{i_2} c^+_{i_1} |0\rangle \qquad(\hbox{each } i_r\in[\bar{n},n]^*)
\label{consec-action}
\eeq
are row-stable with respect to some row index $s$.
\label{prop7}
\end{prop}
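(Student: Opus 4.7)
The plan is to proceed by induction on $k$. For $k=0$ the statement is immediate, since the vacuum pattern has every row equal to zero and is row-stable with respect to $s=1$ with $\nu=\emptyset$. For the inductive step, assume the statement for $k-1$ and decompose
\[
 c^+_{i_{k-1}}\cdots c^+_{i_1}|0\rangle = \sum_m a_m |m)^{2n},
\]
with each $|m)^{2n}$ row-stable with respect to some $s_0=s_0(m)$ and some partition $\nu=\nu(m)$ of size $k-1$. Applying $c^+_{i_k}$ and expanding as in~(\ref{C+}), Proposition~\ref{prop5} tells us that every resulting basis vector $|m')^{2n}$ differs from $|m)^{2n}$ only in rows $\rho(i_k),\ldots,2n$, each of which is altered by raising exactly one entry by $1$. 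The candidate new stability index is
\[
 s' := \max\bigl(s_0,\rho(i_k),2k\bigr),
\]
chosen so that every row $j\geq s'$ was already in the stable region of $|m)^{2n}$, is actually incremented by $c^+_{i_k}$, and has left length $\lceil j/2\rceil\geq k > \ell(\nu)$.

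The next step is to use $\lceil j/2\rceil>\ell(\nu)$ for $j\geq s'$ to conclude that $m_{\bar 1,j}=0$ in $|m)^{2n}$: the rightmost entry of the left part is a trailing zero of $\nu$. Conditions~4 and~5 of~(\ref{cond3}) then rule out a right-part increment in row $j$, since such an increment would require $m_{\bar 1,j}\geq 1$. Hence the increment in row $j$ raises $\nu$ at one of the corners $c_j\in[1,\ell(\nu)+1]$ of its Young diagram, where the value $c_j=\ell(\nu)+1$ is permitted precisely because $\lceil j/2\rceil>\ell(\nu)$.

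The hard part, and the main obstacle, is to show that the column $c_j$ is the same for every $j\geq s'$, and for this one must exploit the $\theta$- and betweenness-conditions of~(\ref{cond3}). Between two consecutive rows $2t-1$ and $2t$ of equal left length, condition~2 reads $m_{-i,2t}-m_{-i,2t-1}\in\{0,1\}$; after the incrementing of these two rows at columns $c_{2t-1}$ and $c_{2t}$ this reduces to a difference of two Kronecker deltas, and the choice $d=c_{2t-1}$ forces $c_{2t-1}=c_{2t}$. Condition~7 between the even row $2s$ and the odd row $2s+1$ just above it gives $m_{-i-1,2s+1}\geq m_{-i,2s}$, which under the double increment becomes $\delta_{d,c_{2s+1}}\geq\delta_{d,c_{2s}}$ and, with the choice $d=c_{2s}$, likewise forces $c_{2s}=c_{2s+1}$. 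Iterating along the consecutive rows $j\geq s'$ produces a single common column $c$. Therefore every row $j\geq s'$ of $|m')^{2n}$ has left part equal to $\nu$ with its $c$-th entry raised by $1$ and right part equal to zero, which is exactly the required stable form with a partition of size $k$. The estimate $s'\leq \max(2k,\max_r\rho(i_r))$ is independent of $n$ as soon as the indices $i_r$ are held in a fixed range, precisely the property that will feed into the infinite-rank extension of Section~\ref{sec:E}.
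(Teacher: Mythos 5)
Your proof is correct, but it takes a genuinely different route from the paper's. The paper gives a direct, non-inductive argument on the final pattern: since every $c^+_{i_r}$ adds exactly one box to row $2n$, the top row of any resulting pattern has total weight $k$, so its left part is a partition of length at most $k<n$ and hence ends in a zero; condition~1 of~\eqref{cond3} then forces the right part of the top row to vanish, so row $2n$ is already of the stable form $[\nu_1,\nu_2,\ldots,0;0,\ldots,0]$ with $|\nu|=k$. One then takes the lowest row $s$ whose left part ends in a zero and carries the full weight $k$; by~\eqref{ineq} all rows from $s$ up to $2n$ have equal weight, and the $\theta$- and betweenness conditions force them all to coincide. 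Your induction on $k$ instead tracks the stability index operator by operator, and in doing so you essentially prove a quantitative version of Proposition~\ref{prop8} in the regime $\lceil j/2\rceil>\ell(\nu)$: your corner-matching step via conditions~2 and~7 is exactly the mechanism the paper deploys only later, in its proof of Proposition~\ref{prop8} (and your delta-function argument is arguably cleaner than the paper's reasoning on generic examples there). What your approach buys is an explicit, $n$-independent bound $s'\le\max\bigl(2k,\max_r\rho(i_r)\bigr)$ on the stability index, which is precisely the uniformity exploited in Section~\ref{sec:E}; what it costs is having to carry out the corner-matching analysis, which the paper's weight count avoids entirely for this proposition. One cosmetic remark: if the added box lands at position $\ell(\nu)+1$ and $\lceil s'/2\rceil=\ell(\nu)+1$, then row $s'$ has no trailing zero in its left part, so under the strict reading of Definition~\ref{defi6} the stability index should be taken as $s'+1$ or $s'+2$; since $\ell(\nu)+1\le k<n$ such a row always exists below $2n$, and the conclusion is unaffected.
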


\begin{proof}
Starting with the zero pattern, the action of each $c^+_{i_r}$ adds a $+1$ to the obtained pattern in some position of row~$j$, for $j\in[\rho(i),2n]$. 
This happens in such a way that the left parts of all the rows are partitions.
Suppose we are at the end of the action~\eqref{consec-action}, and that at that point there is some row index $s$ where the partition in the left part is of the form $[\nu_1,\nu_2,\ldots,0]$, with $|\nu|=k$.
Then by~\eqref{ineq} all rows above $s$ have the same weight, 
and by conditions~3 and~6 of~\eqref{cond3} this means that all rows $j$ with $j>s$ are of the form $[\nu_1,\nu_2,\ldots,0;0,0,\ldots]$.
Thus the pattern obtained is row-stable with respect to row $s$.
Since the length of $\nu$ satisfies $\ell(\nu)\leq k<n$, there is at least one row $s\leq 2n$ such that the left part is of the form $[\nu_1,\nu_2,\ldots,0]$, i.e.\ ending with a 0.
(In the ``worst case'', row~$2n$ is of the form $[1,1,\ldots,0;0,\ldots,0]$.)
\end{proof}

Row-stable patterns are in some sense preserved under the action of $c^+_i$'s.

\begin{prop}
Let $|m)^{2n}$ be row-stable with respect to row $s$, where $s<2n-1$.
Then the vectors $|m')^{2n}$ appearing in $c^+_i |m)^{2n}$ are row-stable with respect to row $\max\{s+2,\rho(i)+1\}$.
\label{prop8}
\end{prop}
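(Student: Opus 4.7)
The plan is to combine Proposition~\ref{prop5} with a descending induction on the row index from $2n$ down to $t := \max\{s+2,\rho(i)+1\}$, establishing that every row $j \geq t$ of $|m')^{2n}$ has the form $[\nu'_1,\nu'_2,\ldots,0;0,0,\ldots]$ with a single common partition $\nu'$ of size $|\nu|+1$. Writing $L_j$ for the size of the left part of row $j$ (so $L_{2k-1}=L_{2k}=k$), the trailing-zero in the row-stable form at $s$ forces $\ell(\nu) \leq L_s - 1$, and $s < 2n-1$ yields $L_s \leq n-1$, hence $\ell(\nu) \leq n-2$. Combined with $L_j \geq L_s+1$ whenever $j \geq s+2$, one gets $m_{-1,j} = m_{-2,j} = 0$ in the original pattern for every $j \geq t$, and Proposition~\ref{prop5} then says row $j$ is obtained by incrementing one entry by $+1$ (since $j \geq \rho(i)+1 > \rho(i)$), producing new weight $|\nu|+1$.

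The base case $j = 2n$ is direct: a $+1$ on the right would force $m'_{-1,2n} \geq 1$ by condition~1, contradicting $m_{-1,2n} = 0$, so the $+1$ lies on the left and the new top row equals $\nu + \text{(addable box)}$, producing a partition $\nu'$ with $\ell(\nu') \leq n-1 < L_{2n}$ and the right part still zero. For the inductive step, assume row $j+1$ has the stable form with $\nu'$. If $j+1 = 2k$ and $j = 2k-1$, the vanishing right part of row $j+1$ together with condition~6 forces the right part of row $j$ to vanish, while condition~2 combined with the weight equality $|m'_j| = |\nu'|$ pins the left part of row $j$ to that of row $j+1$. If $j+1 = 2k+1$ and $j = 2k$, one observes that $m_{-1,j} = m_{-2,j} = 0$ rules out $+1$ on the right (condition~4 would require $m'_{-1,j} \geq 1$) and at left positions $-1$ or $-2$ (the left-partition ordering would break), so the $+1$ sits in a left position further down, producing row $j$ in the form $[\nu_{\text{new},1},\ldots,0;0,0,\ldots]$ with $\nu_{\text{new}} = \nu + \text{box}$ and right zero. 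The betweenness condition~7 applied to rows $j$ and $j+1$ then reads $\nu'_c \geq (\nu_{\text{new}})_c \geq \nu'_{c+1}$ for $c = 1,\ldots,k$, and using $|\nu_{\text{new}}| = |\nu'|$ together with the length bounds $\ell(\nu'),\ell(\nu_{\text{new}}) \leq L_s \leq k-1$ (from the first paragraph), summation collapses the upper inequality to equality and forces $\nu_{\text{new}} = \nu'$.

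The main obstacle is the even case of the inductive step, and it is also what makes the bound $\rho(i)+1$ appear in place of $\rho(i)$. At an even boundary row $j = \rho(i) = 2k$, one may well have $\ell(\nu') = k = L_{\rho(i)}$, in which case the left part of row $j$ cannot display the trailing zero required by the stable form (even though the right part does vanish), so row $\rho(i)$ itself need not be row-stable. Shifting to $j \geq \rho(i)+1$ guarantees $L_j > \ell(\nu')$, the trailing zero is recovered, and the descending induction propagates cleanly down to $t$.
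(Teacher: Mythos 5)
Your argument is correct, and it rests on the same combinatorial engine as the paper's proof: Proposition~\ref{prop5} forces weight $|\nu|+1$ in every row from $\rho(i)$ upward, and the $\theta$-conditions, the betweenness (interlacing) conditions, and the conditions of type $m_{-1,r}\geq\#\{i:m_{i,r}>0\}$ then pin each row to its neighbour once the weights agree. What differs is the organization. The paper fixes where the added unit lands in row $s$ (when $\rho(i)\leq s$) or in row $\rho(i)$ (when $\rho(i)>s$), enumerates the four possible positions on a generic example, and propagates the stable form \emph{upward}, with the general case left to ``the argument generalizes.'' You instead pin down the added unit in the top row and run a descending induction from row $2n$ down to $t=\max\{s+2,\rho(i)+1\}$. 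Your version buys uniformity --- no case split on $\rho(i)\lessgtr s$, no reliance on generic examples, and the role of the quantitative bound $\ell(\nu)\leq L_s-1\leq L_j-2$ for $j\geq s+2$ is made explicit --- at the cost of carrying that length bookkeeping through every step; the paper's version is shorter and makes the origin of the ``$+2$'' more visible (it is exactly case (4) of 1.1, where the unit lands at position $-1$ of an odd row $s$).

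Two small inaccuracies, neither fatal. First, in the even inductive step you claim $m_{-1,j}=m_{-2,j}=0$ also excludes the increment at left position $-2$; it does not when $\ell(\nu)=L_j-2$ exactly, since then $m_{-3,j}=\nu_{L_j-2}\geq 1$ and the ordering survives. But the conclusion you actually use --- the trailing zero at position $-1$ survives and the left part becomes $\nu$ plus one box --- holds in that case as well, so nothing breaks. Second, your closing explanation of why the bound is $\rho(i)+1$ rather than $\rho(i)$ is off target: the scenario $\ell(\nu')=L_{\rho(i)}$ requires $\ell(\nu)\geq L_{\rho(i)}-1$, hence $L_s\geq L_{\rho(i)}$ and $\rho(i)\leq s+1$, a regime in which the binding bound is $s+2$ anyway; when $\rho(i)\geq s+2$ your own inequalities give $L_{\rho(i)}\geq\ell(\nu)+2$, so row $\rho(i)$ itself already acquires the stable form. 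This only means the stated bound is not tight there, which is harmless since the proposition claims the weaker index.
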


\begin{proof}
There are two cases to be considered: $\rho(i)\leq s$ and $\rho(i)>s$.\\
1. Consider first the action of $c^+_i$ with $\rho(i)\leq s$.
Then 1's are added in all rows $j$, $j\in[\rho(i),2n]$, including row $s$.
There are two subcases to be analyzed: $s$ even and $s$ odd.\\
1.1. Let $s$ be odd. 
As a generic example, let row $s$ of $|m)^{2n}$ be given by $[3,2,1,0;0,0,0]$. 
After the action of $c^+_i$, row $s$ in the resulting vectors $|m')^{2n}$ is one of the following:
\[
(1)  \  [4,2,1,0;0,0,0],\quad
(2)  \  [3,3,1,0;0,0,0],\quad
(3)  \  [3,2,2,0;0,0,0],\quad
(4)  \  [3,2,1,1;0,0,0].
\]
By condition~3 of~\eqref{cond3}, row $s+1$ is respectively of the form
\[
(1)  \ [4,2,1,0;0,0,0,0],\quad
(2)  \ [3,3,1,0;0,0,0,0],\quad
(3)  \ [3,2,2,0;0,0,0,0],\quad
(4)  \ [3,2,1,1;0,0,0,0].
\]
It is clear that for (1), (2) and (3) the resulting vector is row-stable with respect to row $s$.
For (4), the conditions~\eqref{cond3} imply that row $s+2$ must be equal to $[3,2,1,1,0;0,0,0,0]$, and then the vector is row-stable with respect to row $s+2$.
Clearly, this argument works whenever row $s$ is of the form $[\nu_1,\nu_2,\ldots,0;0,\ldots,0]$.\\
1.2. Let $s$ be even. 
We first work again with a generic example, say row $s$ of $|m)^{2n}$ is given by $[4,3,1,0;0,0,0,0]$. 
After the action of $c^+_i$, row $s$ in the resulting vectors $|m')^{2n}$ is one of the following:
\[
(1)  \ [5,3,1,0;0,0,0,0],\quad
(2)  \ [4,4,1,0;0,0,0,0],\quad
(3)  \ [4,3,2,0;0,0,0,0],\quad
(4)  \ [4,3,1,1;0,0,0,0].
\]
Again by conditions~\eqref{cond3}, row $s+1$ is then respectively
\[
(1)  \ [5,3,1,0,0;0,0,0,0],\quad
(2)  \ [4,4,1,0,0;0,0,0,0],\quad
(3)  \ [4,3,2,0,0;0,0,0,0],\quad
(4)  \ [4,2,1,1,0;0,0,0,0].
\]
For (1), (2) and (3) the resulting vector is row-stable with respect to row $s$.
For (4), the resulting vector is row-stable with respect to row $s+1$.
The argument generalizes when row $s$ is of the form $[\nu_1,\nu_2,\ldots,0;0,\ldots,0]$.\\
2. Consider next the action of $c^+_i$ with $\rho(i)> s$.
It is now a matter of considering the entries in row $\rho(i)$ of $|m)^{2n}$.
Exactly the same arguments as in cases 1.1 and 1.2 work here (with $\rho(i)$ taking over the role of $s$), except that situation (4) of 1.1 does not appear (since $\rho(i)>s$).
Hence in this case $|m')^{2n}$ is row-stable with respect to row $\rho(i)$ or $\rho(i)+1$.
\end{proof}

For the action of the annihilation operators $c^-_i$, the analysis is similar, but simpler.
Similarly as in Proposition~\ref{prop5}, when 
\[
 c_i^-|m)^{2n} = \sum_{m'} C^-\left[i,|m)^{2n},|m')^{2n}\right]\; |m')^{2n} \quad (i\in[\bar{n},n]^*),
\]
then the only patterns $|m')^{2n}$ appearing in the right hand side of the above expression\footnote 
{To clarify the meaning, a basis vector is said to appear in an expression (where the expression is written as a linear combination of basis vectors) if its coefficient is nonzero in this linear combination.}
are valid GZ-patterns $|m')^{2n}$ such that
\begin{align}
& [m']^j = [m]^j \hbox{ for } j\in[1,\rho(i)-1], \nn\\
& [m']^j = [m]^j+[0,\ldots,0,-1,0,\ldots,0] \hbox{ for } j\in[\rho(i),2n].
\label{changes-}
\end{align}

Suppose now that $|m)^{2n}$ has stability index $s$. Then, in particular,
\beq
|m_{1}|\leq |m_{2}|\leq \cdots \leq |m_{s}| =|m_{s+1}| =\cdots = |m_{2n-1}|= |m_{2n}|.
\eeq
If one acts on $|m)^{2n}$ by $c^-_{i}$ with $\rho(i)>s$, then $|m_s|$ would remain unchanged, whereas all $|m_j|$ with $j\in[\rho(i),2n]$ are decreased by 1. 
That would violate~\eqref{ineq}.
Hence the action of $c^-_{i}$ with $\rho(i)>s$ on a vector $|m)^{2n}$ with stability index $s$ is zero.

If one acts on $|m)^{2n}$ by $c^-_{i}$ with $\rho(i)\leq s$, then the combinatorics of the conditions~\eqref{cond3} imply that in the resulting vectors all partitions must be the same in (the left parts of) rows $s,s+1,\ldots,2n$.
Hence the action of $c^-_{i}$ with $\rho(i)\leq s$ on a vector $|m)^{2n}$ with stability index $s$ can only yield vectors $|m')^{2n}$ which have $s$ as stability index. Concluding:

\begin{prop}
Let $|m)^{2n}$ be row-stable with respect to row $s$.
Then the vectors $|m')^{2n}$ appearing in $c^-_i |m)^{2n}$ are row-stable with respect to row $s$.
\label{prop9}
\end{prop}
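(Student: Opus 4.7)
The plan is to analyze, separately, the two cases $\rho(i) > s$ and $\rho(i) \leq s$, relying on the modification rule~\eqref{changes-} (the $c^-_i$ analogue of Proposition~\ref{prop5}) and on the weight monotonicity~\eqref{ineq}. Row-stability of $|m)^{2n}$ with respect to $s$ means that rows $s, s+1, \ldots, 2n$ share the common shape $[\nu_1, \nu_2, \ldots, 0;\, 0, 0, \ldots]$ for some partition $\nu$, so in particular $|m_s| = |m_{s+1}| = \cdots = |m_{2n}| = |\nu|$.

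For $\rho(i) > s$: by~\eqref{changes-}, row $s$ of any produced $|m')^{2n}$ keeps weight $|\nu|$ while each row $j \in [\rho(i), 2n]$ has weight $|\nu|-1$. Monotonicity~\eqref{ineq} would require $|m'_s| \leq |m'_{\rho(i)}|$, i.e.\ $|\nu| \leq |\nu|-1$, which is impossible. So no valid GZ-pattern appears and the coefficient in~\eqref{C-} vanishes; the conclusion holds vacuously.

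For $\rho(i) \leq s$: the single decrement forced in each of rows $s, s+1, \ldots, 2n$ of $|m)^{2n}$ must occur inside the left part, since the right-part entries of these rows are already zero and cannot be made negative without violating~\eqref{cond3}. Thus each such row of $|m')^{2n}$ has zero right part and weight $|\nu|-1$. It remains to check that the left parts of rows $s, s+1, \ldots, 2n$ of $|m')^{2n}$ all describe one and the same partition (padded by zeros). For a consecutive pair of rows $(2t-1, 2t)$, the $\theta$-condition~2 in~\eqref{cond3} gives $m'_{-i,2t} - m'_{-i,2t-1} \in \{0,1\}$, and equality of row weights forces every such difference to vanish, so the two rows coincide on the left. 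For a consecutive pair $(2t, 2t+1)$, the interlacing condition~7 yields $m'_{-i-1,2t+1} \geq m'_{-i,2t} \geq m'_{-i,2t+1}$; combined with the trailing zero $m'_{-1,2t} = 0$ inherited from row-stability and with equal row weights, a short telescoping of these inequalities forces $m'_{-1,2t+1} = 0$ and $m'_{-i-1,2t+1} = m'_{-i,2t}$ throughout, so the two left parts agree up to an extra trailing zero. Iterating these two matching steps upward from row $s$ produces a common partition $\nu'$ in all rows $s, s+1, \ldots, 2n$, exhibiting $s$ as a stability index of $|m')^{2n}$.

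The main obstacle will be executing this entry-matching cleanly across the two parities of the row index and across rows of different lengths; once that combinatorial step is carefully handled, the two cases above jointly yield the proposition.
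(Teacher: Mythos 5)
Your proposal is correct and follows essentially the same route as the paper: the case $\rho(i)>s$ is killed by the weight monotonicity~\eqref{ineq}, and the case $\rho(i)\leq s$ is handled by the combinatorics of the conditions~\eqref{cond3} forcing the left parts of rows $s,\ldots,2n$ to remain a common partition. The only difference is that you spell out explicitly the $\theta$-condition and interlacing arguments that the paper leaves as a one-line remark, and your telescoping of those conditions is sound.
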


Note that Proposition~\ref{prop7}, where a string of creation operators was used, can be extended for a string of creation and/or annihilation operators. 
Consider
\beq
c_{i_k}^{\eta_k} \cdots c_{i_2}^{\eta_2} c_{i_1}^{\eta_1} |0\rangle, \qquad(\eta_r\in\{+1,-1\},\  i_r\in[\bar{n},n]^*),
\label{string}
\eeq
with $0\leq |\eta|\equiv\eta_1+\eta_2+\cdots +\eta_k<n$.
Then all basis vectors appearing in~\eqref{string} are row-stable with respect to some row~$s$.
Note that in this case, the partition appearing in the stable row,
\[
[\nu_1,\nu_2,\ldots,0;0,\ldots,0],
\]
has $|\nu|=|\eta|$.

To complete this section, we will also establish a stability property of the coefficients in~\eqref{C+}-\eqref{C-}.
Suppose that the top row of $|m)^{2n}$ has the zero partition as second part, i.e.\ it is of the form
\[
[m]^{2n}=[\nu_1,\nu_2,\ldots;0,\ldots,0]
\]
with $\nu$ a partition.
Define a map $\phi_{2n,+2}$ from the set of GZ-patterns $|m)^{2n}$ with zero second part to the set of GZ-patterns $|m)^{2n+2}$ with stability index $2n$ by:
\begin{align}
& |m)^{2n+2}=\phi_{2n,+2}\left( |m)^{2n} \right),\label{phi}\\
& \hbox{with }
[m]^{2n+1}=[\nu_1,\nu_2,\ldots,0,0;0,\ldots,0],\quad [m]^{2n+2}=[\nu_1,\nu_2,\ldots,0,0;0,\ldots,0,0]. \nn
\end{align}
In other words, the top row of $|m)^{2n}$ is just repeated twice, with the extra addition of zeros in order to have sufficient entries for the pattern $|m)^{2n+2}$.
Clearly, the action of $\phi_{2n,+2}$ can also be extended by linearity, on a linear combination of vectors $|m)^{2n}$ with zero second part.

\begin{prop}
\label{prop10}
Let $|m)^{2n}$ be row-stable with respect to row~$2n$, and $|m)^{2n+2}=\phi_{2n,+2}\left( |m)^{2n} \right)$
Then for all $i$ with $\rho(i)\leq 2n$ (or equivalently, $i\in[-n,n]^*$):
\[
c^+_i |m)^{2n+2} = \phi_{2n,+2}\left(c^+_i|m)^{2n}\right).
\]
In other words, if 
\beq
c_i^+|m)^{2n} = \sum_{m'} C^+\left[i,|m)^{2n},|m')^{2n}\right]\; |m')^{2n} 
\label{eq10}
\eeq
then
\[
c_i^+|m)^{2n+2} = \sum_{m'} C^+\left[i,|m)^{2n},|m')^{2n}\right]\; \phi_{2n,+2}\left(|m')^{2n}\right); 
\]
or otherwise said:
\[
C^+\left[i,|m)^{2n+2},|m')^{2n+2}\right] = C^+\left[i,|m)^{2n},|m')^{2n}\right].
\]
\end{prop}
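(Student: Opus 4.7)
The plan is to verify the claim by treating the two factors in the factorization~\eqref{mmatrix} of the matrix element
\[
C^+[i,|m)^{2n},|m')^{2n}]
= (\text{CGC})\cdot([m]^{2n}_{+(k)}\|c^+\|[m]^{2n})
\]
separately, and showing that each is preserved under the map $\phi_{2n,+2}$.

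First I would establish the combinatorial content: for every basis vector $|m')^{2n+2}$ occurring in $c_i^+|m)^{2n+2}$ there is a unique $|m')^{2n}$ occurring in $c_i^+|m)^{2n}$ with $|m')^{2n+2}=\phi_{2n,+2}(|m')^{2n})$. By Proposition~\ref{prop5}, each of the rows $\rho(i),\ldots,2n+2$ receives exactly one $+1$. Since $|m)^{2n+2}=\phi_{2n,+2}(|m)^{2n})$, the rows $2n,2n+1,2n+2$ all share the same partition $\nu$ in their left parts (with appropriate trailing zero padding) and have zero right parts. A case analysis using the $\theta$-conditions (2, 3), the betweenness conditions (6, 7) and the positivity conditions (4, 5) of~\eqref{cond3} then forces the column of the $+1$ in rows $2n+1$ and $2n+2$ to be uniquely determined by the column of the $+1$ introduced in row $2n$. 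In particular, the resulting top three rows of $|m')^{2n+2}$ coincide with $\phi_{2n,+2}$ applied to the top row of $|m')^{2n}$.

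Second, for the reduced matrix element: by Section~\ref{sec:C}, the functions $G_{n+k+1}$ and $G_{n+k}$ ($k\in[-n,n]^*$) imported from~\cite{SV2015} depend only on the top-row weights. Going from $[m]^{2n}$ to $[m]^{2n+2}$ under $\phi_{2n,+2}$ inserts one zero coordinate in each of the left and right halves of the weight vector. A direct inspection of the explicit expressions for the $G$-functions shows that they are invariant under such insertions, provided $k\in[-n,n]^*$ so that the ``active'' coordinate is still present. Hence the reduced matrix element is preserved.

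Third, for the Clebsch--Gordan coefficient: using the explicit factored form derived in Appendix~\ref{A}, the CGC is a product of row-transition factors. Under $\phi_{2n,+2}$ two extra transitions appear, namely $2n\to 2n+1$ and $2n+1\to 2n+2$. By the first step, the $\theta$-values and column data on these two transitions are completely determined by the unique added $+1$ in row $2n$, and the corresponding factors in the CGC collapse to $1$. The remaining row-transition factors reproduce those of the CGC associated with the pair $(|m)^{2n},|m')^{2n})$, so the CGC is preserved under $\phi_{2n,+2}$.

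The main obstacle is this third step: verifying that the two extra CGC row-transition factors really do equal $1$ requires a careful check of the explicit expressions in Appendix~\ref{A} across all admissible positions of the added $+1$ (both in the left and in the right parts of row $2n$), and this is a tedious case analysis rather than a one-line argument. Once completed, the three preservation statements combine to give $C^+[i,|m)^{2n+2},|m')^{2n+2}]=C^+[i,|m)^{2n},|m')^{2n}]$, and hence $c_i^+|m)^{2n+2}=\phi_{2n,+2}(c_i^+|m)^{2n})$ as claimed.
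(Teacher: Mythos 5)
Your proposal is correct and follows essentially the same route as the paper: factor the matrix element into a reduced matrix element (unchanged because the $G$-functions are insensitive to the inserted zero coordinates) times a CGC, peel off the two extra isoscalar factors for the transitions between rows $2n+2$, $2n+1$ and $2n$, and check that for the stable row shapes $[\mu_1,\mu_2,\ldots,0;0,\ldots,0]$ these factors (the paper's~\eqref{liso3} and~\eqref{Sliso3}) equal $1$. Your preliminary combinatorial step, showing the added $+1$'s in rows $2n+1$ and $2n+2$ are forced, is exactly what the paper compresses into the phrase ``by construction,'' relying on the case analysis of Proposition~\ref{prop8}.
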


\begin{proof}
First of all, all vectors $|m')^{2n}$ in~\eqref{eq10} are such that the 2nd part of row $2n$ consists of zeros, so $\phi_{2n,+2}\left( |m')^{2n} \right)$ is well defined.
Consider now $C^+[i,|m)^{2n+2},|m')^{2n+2}]$, of the form
\beq
\left( 
\begin{array}{c}1 0 \cdots 0 0\\[-1mm] 1 0 \cdots 0\\[-1mm] \cdots\\[-1mm] 0 \end{array} ;
\begin{array}{ll} [m]^{2n+2} \\[2mm] |m)^{2n+1} \end{array}  \right.
\left| 
\begin{array}{ll} [m]^{2n+2}_{+(k)} \\[2mm] |m')^{2n+1} \end{array} \right)
([m]^{2n+2}_{+(k)}||c^+||[m]^{2n+2}) 
\eeq
for some row index $k$ with $k\in[-n-1,-2]$.
It is now a matter of inspecting the CGC and reduced matrix element in this case.
For the reduced matrix element, $[m]^{2n+2}$ is of the form
\[
[\nu_1,\nu_2,\ldots,0,0;0,\ldots,0,0]
\]
and $[m]^{2n+2}_{+(k)}$ is of the same form, with one of the parts of $\nu$ increased by 1.
From the explicit form of these reduced matrix elements~\cite[(3.24)-(3.25)]{SV2015}, it is not difficult to see that
\[
([m]^{2n+2}_{+(k)}||c^+||[m]^{2n+2}) = ([m]^{2n}_{+(k)}||c^+||[m]^{2n}).
\]
Following the notation of the Appendix, the CGC of $\gl(n+1|n+1)$ can be written as a product of two isoscalar factors times a CGC for $\gl(n|n)$:
\begin{align}
& \left( \begin{array}{c}10\cdots 00\\10\cdots 0\\\cdots \\ 0  \end{array};
\begin{array}{ll} [m]^{2n+2} \\[2mm] |m)^{2n+1} \end{array}  \right.
 \left| \begin{array}{ll} [m]^{2n+2}_{+(k)} \\[2mm] |m^\prime)^{2n+1} \end{array} \right)
 =  
\left( \begin{array}{l} 1\dot{0} \\ 1 \dot{0} \end{array}
\left| \begin{array}{l}  [m]^{2n+2} \\ {[m]}^{2n+1} \end{array} \right.\right|
\left. \begin{array}{l} [m]^{2n+2}_{+(k)}  \\ {[m^\prime]}^{2n+1} \end{array} \right)\times \nonumber\\
& \times\left( \begin{array}{l} 1\dot{0} \\ 1 \dot{0} \end{array} 
\left| \begin{array}{l}  [m]^{2n+1} \\ {[m]}^{2n} \end{array} \right.\right|
\left. \begin{array}{l} [m^\prime]^{2n+1}  \\ {[m^\prime]}^{2n} \end{array} \right) 
\left( \begin{array}{c} 10\cdots 00\\ \cdots \\ 0\end{array};
\begin{array}{ll} [m]^{2n} \\[2mm] |m)^{2n-1} \end{array}  \right.
 \left| \begin{array}{ll} [m]^{2n}_{+(k+1)} \\[2mm] |m^\prime)^{2n-1} \end{array} \right).
\label{red}
\end{align} 
By construction, up to additional 0's at the end of both parts, $[m]^{2n+2}_{+(k)}$, ${[m^\prime]}^{2n+1}$ and ${[m^\prime]}^{2n}=[m]^{2n}_{+(k+1)}$ are identical, all of the form
\[
[\mu_1,\mu_2,\ldots,0;0,\ldots,0]
\]
where $\mu$ is a partition obtained from $\nu$ by increasing one part by 1.
But for such special values, the isoscalar factors (given in the Appendix) simplify:
\[
\left( \begin{array}{l} 1\dot{0} \\ 1 \dot{0} \end{array}
\left| \begin{array}{l}  [\nu_1,\nu_2,\ldots,0;0\ldots,0]^{2n+2} \\ {[\nu_1,\nu_2,\ldots,0;0\ldots,0]}^{2n+1} \end{array} \right.\right|
\left. \begin{array}{l} [\mu_1,\mu_2,\ldots,0;0\ldots,0]^{2n+2}  \\ {[\mu_1,\mu_2,\ldots,0;0\ldots,0]}^{2n+1} \end{array} \right) = 1
\]
by~\eqref{liso3},
and
\[
\left( \begin{array}{l} 1\dot{0} \\ 1 \dot{0} \end{array}
\left| \begin{array}{l}  [\nu_1,\nu_2,\ldots,0;0\ldots,0]^{2n+1} \\ {[\nu_1,\nu_2,\ldots,0;0\ldots,0]}^{2n} \end{array} \right.\right|
\left. \begin{array}{l} [\mu_1,\mu_2,\ldots,0;0\ldots,0]^{2n+1}  \\ {[\mu_1,\mu_2,\ldots,0;0\ldots,0]}^{2n} \end{array} \right) = 1
\]
by~\eqref{Sliso3}.
Thus we obtain the result.
\end{proof}
Note that Proposition~\ref{prop10} is also valid if one replaces $c^+_i$ by $c^-_i$.

%%%%%%%%%%%%%%%%%%%%%%%%%%%%%%%%%%%%%%%%%%%%%%%%%%%%%%%%%%%%%%%%%%%%%%%%%%%%%%%
%%%%% section
%%%%%%%%%%%%%%%%%%%%%%%%%%%%%%%%%%%%%%%%%%%%%%%%%%%%%%%%%%%%%%%%%%%%%%%%%%%%%%%
\section{The Lie superalgebra $\B(\infty,\infty)$ and its Fock representations $V(p)$}
\setcounter{equation}{0} \label{sec:E}

We are now in a position that we can extend both the parastatistics algebra $\B(n,n)$ and its Fock representations $V(p,n)$ to the infinite rank case $\B(\infty,\infty)$.
As usual for infinite rank Lie algebras or Lie superalgebras, the matrix form will consist of certain infinite matrices with a finite number of non-zero elements~\cite{Kac-Raina,Kac3,Penkov}.

Consider the set of all squared infinite matrices of the form
\beq
X:=
\left( \begin{array}{ccc:c|ccc}
\ddots & \vdots & \vdots & \vdots & \vdots & \vdots & \iddots \\
\cdots & X_{\bar{2},\bar{2}} & X_{\bar{2},\bar{1}} & X_{\bar{2},0} & X_{\bar{2},1} & X_{\bar{2},2} & \cdots\\
\cdots & X_{\bar{1},\bar{2}} & X_{\bar{1},\bar{1}} & X_{\bar{1},0} & X_{\bar{1},1} & X_{\bar{1},2} & \cdots\\
\hdashline
\cdots & X_{{0},\bar{2}}  & X_{{0},\bar{1}} & 0 & X_{{0},1} & X_{{0},2}& \cdots \\
\hline
\cdots & X_{{1},\bar{2}} & X_{{1},\bar{1}} & X_{{1},0} & X_{{1},1} & X_{{1},2}& \cdots \\
\cdots & X_{{2},\bar{2}} & X_{{2},\bar{1}} & X_{{2},0} & X_{{2},1} & X_{{2},2}& \cdots \\
\iddots & \dots & \vdots & \vdots & \vdots & \vdots & \ddots 
\end{array} \right).
\label{Xi}
\eeq
where the above indices take values in the set $\Z$.
Herein, any matrix of the form $X_{ij}$ with $i,j \in\Z^*$ is a $(2\times 2)$-matrix, $X_{0,i}$ is a $(1\times 2)$-matrix and $X_{i,0}$ a $(2\times 1)$-matrix.
The infinite-dimensional Lie superalgebra $\B(\infty,\infty)$ can be defined as the set of all squared infinite matrices of the form~\eqref{Xi} such that each matrix has only a finite number of nonzero entries, and such that the (non-zero) blocks satisfy
\begin{align*}
& I X_{\bar{i},\bar{j}}+X_{\bar{j},\bar{i}}^T I=0, \quad
JX_{{i},{j}}+X_{{j},{i}}^T J=0, \quad
I X_{\bar{i},{j}}-X_{{j},\bar{i}}^T J=0 \qquad(i,j\in\Z_+^*);\\
& X_{0,\bar{j}}+X_{\bar{j},0}^T I=0, \quad X_{0,{j}}-X_{{j},0}^T  J=0 \qquad(j\in\Z_+).
\end{align*}
Such an element $X$ is even if $X_{ij}=0$ for all $(i,j)\in\Z_-\times Z_+^*$ and $(i,j)\in\Z_+^*\times\Z_-$, and $X$ is odd if  $X_{ij}=0$ for all $(i,j)\in\Z_-\times\Z_-$ and $(i,j)\in\Z_+^*\times\Z_+^*$.
For homogeneous elements $X$ and $Y$, the Lie superalgebra bracket is defined as usual,
\[
\lb X,Y\rb = XY-(-1)^{\deg(X)\deg(Y)}YX,
\]
and extended by linearity.

We can again consider the matrices $e_{ij}$ consisting of zeros everywhere except a $1$ on position $(i,j)$, where the row and column indices belong to $\Z$. 
A basis of a Cartan subalgebra $\h$ of $\B(\infty,\infty)$ consists of the elements $h_i=e_{2i-1,2i-1}-e_{2i,2i}$ ($i\in\Z_+^*$) and
$h_{i}=e_{2i,2i}-e_{2i+1,2i+1}$ ($i\in\Z_-^*$). 
The corresponding dual basis of $\h^*$ is denoted by $\epsilon_i$ ($i\in\Z^*$).
As in the finite rank case, we can identify the following even root vectors with roots $\epsilon_{-i}$ and $-\epsilon_{-i}$ respectively ($i\in\Z_+^*$):
\begin{align}
&c_{-i}^+\equiv f_{-i}^+= \sqrt{2}(e_{-2i,0}-e_{0,-2i+1}), \nn\\
&c_{-i}^-\equiv f_{-i}^-= \sqrt{2}(e_{0,-2i}-e_{-2i+1,0}), \;
\label{f-as-ei}
\end{align}
and odd root vectors with roots $\epsilon_{i}$ and $-\epsilon_{i}$ respectively ($i\in\Z_+^*$):
\begin{align}
&c_{i}^+ \equiv b_{i}^+= \sqrt{2}(e_{0,2i}+e_{2i-1,0}), \nn\\
&c_{i}^-\equiv b_{i}^-= \sqrt{2}(e_{0,2i-1}-e_{2i,0}). \; 
\label{b-as-ei}
\end{align}
The operators $c_i^+$ can be chosen as positive root vectors, and the $c_i^-$ as negative root vectors.

Just as in the finite rank case, the operators introduced here satisfy the triple relations of parastatistics.
So we are dealing with an infinite number of parafermions and an infinite number of parabosons, satisfying the mutual relative parafermion relations.
This is a straightforward extension of~\eqref{f-rel}-\eqref{fb-rel}, and these three sets of relations can be combined in a somewhat complicated form:
\begin{align}
& \lb\lb c_{ j}^{\xi}, c_{ k}^{\eta}\rb , c_{l}^{\epsilon}\rb =-2
\delta_{jl}\delta_{\epsilon, -\xi}\epsilon^{\langle l \rangle} 
(-1)^{\langle k \rangle \langle l \rangle }
c_{k}^{\eta} +2 \epsilon^{\langle l \rangle }
\delta_{kl}\delta_{\epsilon, -\eta}
c_{j}^{\xi}, \label{paraospi}\\
& \qquad\qquad \xi, \eta, \epsilon =\pm\hbox{ or }\pm 1;\quad j,k,l\in\Z^*, \nn 
\end{align} 
and we used the abbreviation $\langle i \rangle=\deg(c^\pm_i)$.

It is not difficult to see that Theorem~\ref{theo1} extends to the infinite rank case:
\begin{theo}
\label{theo11}
As a Lie superalgebra defined by generators and relations, 
$\B(\infty,\infty)$ is generated by the elements $c_i^\pm$ ($i\in\Z^*$) subject to the relations~\eqref{paraospi}.
\end{theo}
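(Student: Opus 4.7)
The plan is to reduce the infinite-rank statement to the finite-rank Theorem~\ref{theo1} by a standard exhaustion argument. Let $\mathcal{L}$ denote the abstract Lie superalgebra defined by generators $c_i^\pm$ ($i\in\Z^*$) with $\deg(c_i^\pm)=\langle i\rangle$, modulo the relations \eqref{paraospi}. First I would check that the concrete matrices given in \eqref{f-as-ei}--\eqref{b-as-ei} satisfy \eqref{paraospi}: each such relation involves only finitely many of the $c_i^\pm$ and is therefore supported inside some finite block $\B(n,n)\subset\B(\infty,\infty)$, where the identity holds by Theorem~\ref{theo1}. This provides a well-defined Lie superalgebra homomorphism $\pi:\mathcal{L}\to\B(\infty,\infty)$.

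Surjectivity of $\pi$ is immediate: any $X\in\B(\infty,\infty)$ has only finitely many nonzero entries, so it lies in some $\B(n,n)$, and by Theorem~\ref{theo1} it is a polynomial in brackets of $\{c_i^\pm : i\in[\bar n,n]^*\}$, hence in the image of $\pi$.

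The main content is injectivity. For each $n$, let $\mathcal{L}_n\subseteq\mathcal{L}$ be the sub-superalgebra generated by $\{c_i^\pm : i\in[\bar n,n]^*\}$. Since every bracket in $\mathcal{L}$ is a finite expression, $\mathcal{L}=\bigcup_n \mathcal{L}_n$, and $\pi|_{\mathcal{L}_n}$ lands in $\B(n,n)$. The key observation is that the sub-collection of relations \eqref{paraospi} obtained by restricting $j,k,l$ to $[\bar n,n]^*$ is exactly the defining collection \eqref{f-rel}--\eqref{fb-rel} of $\B(n,n)$ given in Theorem~\ref{theo1}. These relations hold in $\mathcal{L}_n$ by construction, so the universal property of the finite-rank presentation yields a homomorphism $\B(n,n)\to\mathcal{L}_n$ sending each generator to itself. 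This is a two-sided inverse to $\pi|_{\mathcal{L}_n}$ on generators, and hence on all of $\mathcal{L}_n$ and $\B(n,n)$. Thus $\pi|_{\mathcal{L}_n}:\mathcal{L}_n\xrightarrow{\sim}\B(n,n)$, and passing to the union gives the desired isomorphism $\pi:\mathcal{L}\xrightarrow{\sim}\B(\infty,\infty)$.

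The subtlest point is ensuring that the relations in \eqref{paraospi} involving indices outside $[\bar n,n]^*$ do not introduce hidden consequences among the elements of $\mathcal{L}_n$; this is precisely what the left inverse $\B(n,n)\to\mathcal{L}_n$ rules out, by exhibiting $\mathcal{L}_n$ as no smaller than $\B(n,n)$. This is the only place where the finite-rank Theorem~\ref{theo1} is used essentially; the remaining direct-limit bookkeeping is routine.
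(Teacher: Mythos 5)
Your argument is correct: the homomorphism $\pi$ from the presented algebra, the exhaustion $\mathcal{L}=\bigcup_n\mathcal{L}_n$, and the two-sided inverse $\B(n,n)\to\mathcal{L}_n$ obtained from the universal property of the finite-rank presentation together give exactly the direct-limit reduction to Theorem~\ref{theo1} that the statement requires. The paper itself offers no proof beyond the remark that the extension is ``not difficult to see,'' so your write-up simply supplies the intended details, including the one genuinely delicate point (that relations with indices outside $[\bar n,n]^*$ impose no hidden relations on $\mathcal{L}_n$), which you correctly dispose of via the left inverse.
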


The parastatistics Fock space of order $p$, with $p$ a positive integer, can be defined as before, and will correspond to a lowest weight representation $V(p)$ of the algebra $\B(\infty,\infty)$.  
$V(p)$ is the Hilbert space generated by a vacuum vector $|0\rangle$ and the parastatistics creation and annihilation operators, i.e.\ subject to 
\begin{align}
& \langle 0|0\rangle=1, \qquad c_{j}^- |0\rangle = 0, 
\qquad (c_j^\pm)^\dagger = c_j^\mp, \qquad
\lb c_j^-, c_k^+ \rb |0\rangle = p\delta_{jk}\, |0\rangle \quad
(j,k\in\Z^*)
\label{Focki}
\end{align}
and which is irreducible  under the action of the algebra $\B(\infty,\infty)$.
Clearly $|0\rangle$ is a lowest weight vector of $V(p)$ with weight $[\ldots,-\frac{p}{2}, -\frac{p}{2}; \frac{p}{2}, \frac{p}{2}, \ldots]$ in the basis $\{\ldots,\epsilon_{-2}, \epsilon_{-1}, \epsilon_{1}, \epsilon_{2}, \ldots\}$.

In the following, we shall first describe the set of basis vectors of $V(p)$, then give the action of the operators $c^\pm_i$ on these basis vectors, and finally prove that under this action one is indeed dealing with an irreducible representation of the Lie superalgebra $\B(\infty,\infty)$.

The basic idea is that the GZ-patterns~\eqref{mn} consisting of $2n$ rows in the finite rank case can be extended to GZ-patterns with an infinite number of rows.
But not all infinite GZ-patterns are valid, only the ones that are ``row-stable'' will correspond to vectors of $V(p)$.
(Note that in a previous paper~\cite{SV2016}, we were also dealing with ``stable'' GZ-patterns of $\gl(\infty|\infty)$ in an odd basis. 
The ones appearing in that earlier paper could be called ``column-stable'', and are not the ones appearing here.)

Before giving a definition of infinite row-stable GZ-patterns, let us give an example.
Consider the pattern with an infinite number of rows:
\begin{equation}
|m)^{\infty} = \left|
\begin{array}{ccccc:ccccc}
 \ddots&\ddots& & & & & &\iddots&\iddots \\
  &4&3&1&0&0&0&0&0&  \\
  &4&3&1&0&0&0&0& &  \\
  & &4&3&1&0&0&0& &  \\
  & &4&3&1&0&0& & &  \\
  & & &3&3&1&0& & &  \\
  & & &3&2&1& & & &  \\
  & & & &2&2& & & &  \\
  & & & &1& & & & &  
\end{array}
\right).
\end{equation}
The definition of stability is essentially the same as in the finite case: the pattern $|m)^{\infty}$ is row-stable with respect to row $s$ if there exists a partition $\nu$ such that all rows $s,s+1,s+2,\ldots$ are of the form
\[
[\nu_1,\nu_2,\ldots,0;0,0,\ldots].
\]
In that case, $s$ is called a stability index of $|m)^{\infty}$.
In the above example, the smallest stability index is 7.

The following proposition describes the basis of $V(p)$.
\begin{prop}
A basis of $V(p)$ is given by all infinite row-stable GZ-patterns $|m)^{\infty}$ of the following form:
\beq
 |p;m)^{\infty} \equiv |m)^{\infty} =  \hspace{6cm} 
\label{mni}
\eeq
\begin{equation*}
 \left|
\begin{array}{llcll:llclll}
\ddots & \vdots &  & \vdots & \vdots & \vdots & \vdots &\cdots & \vdots & \iddots &\\
 & m_{\bar{n},2n} & \cdots & m_{\bar 2,2n} & m_{\bar 1,2n} & m_{1,2n} & m_{2,2n} &\cdots & m_{n-1,2n} & m_{n,2n} &\\
 &\uparrow & \cdots & \uparrow &\uparrow &&&&\\
 & m_{\overbar{n},2n-1} & \cdots & m_{\bar 2,2n-1} & m_{\bar 1,2n-1} & m_{1,2n-1} & m_{2,2n-1} &\cdots & m_{n-1,2n-1}  &  &\\
 &  &\ddots &\vdots & \vdots & \vdots &\vdots &\iddots & & \\
&&& m_{\bar 2 4} &  m_{\bar 1 4} & m_{14} & m_{24}& & & & \\
&&&\uparrow & \uparrow \\
&&& m_{\bar 2 3} &  m_{\bar 1 3} & m_{13} && & & & \\
&&&&&\downarrow \\
&&&& m_{\bar 1 2} & m_{12} & & & & & \\
&&&& \uparrow\\
&&&& m_{\bar 1 1}  & & & & & &
\end{array}
\right) =
\left|
\begin{array}{l} 
\vdots \\ {}[m]^{2n} \\ \phantom{\uparrow} \\ {}[m]^{2n-1} \\  \vdots \\ {}[m]^4 \\ \phantom{\uparrow}\\
{}[m]^3 \\ \phantom{\uparrow}\\{}[m]^2 \\ \phantom{\uparrow} \\ {}[m]^1 
\end{array}
\right)
\end{equation*}
where for each $|m)^\infty$ there should exist a row index $s$ (depending on $|m)^\infty$) such that row $s$ is of the form
\[
[m]^s=[\nu_1,\nu_2,\ldots,0;0,0,\ldots]
\]
with $\nu$ a partition, all rows above $s$ are of the same form (up to extra zeros), and $\nu_1\leq p$.
Furthermore all $m_{ij}\in\Z_+$ and the usual GZ-conditions should be satisfied:
\begin{equation}
 \begin{array}{rl}
1.& m_{-i,2r}-m_{-i,2r-1}\equiv\theta_{-i,2r-1}\in\{0,1\},\quad 1\leq i \leq r;\\
2.& m_{i,2r}-m_{i,2r+1}\equiv\theta_{i,2r}\in\{0,1\},\quad 1\leq i\leq r ;\\    
3.& m_{-1,2r}\geq \# \{i:m_{i,2r}>0,\; i\in[1,r] \}, \; r\in\Z_+^* ;\\  
4.& m_{-1,2r+1}\geq \# \{i:m_{i,2r+1}>0,\; i\in[1,r] \}, \; r\in\Z_+^* ;\\ 
5.& m_{i,2r+2}-m_{i,2r+1}\in{\mathbb Z}_+\hbox{ and } m_{i,2r+1}-m_{i+1,2r+2}\in{\mathbb Z}_+,\quad 1\leq i\leq r;\\
6.& m_{-i-1,2r+1}-m_{-i,2r}\in{\mathbb Z}_+\hbox{ and } m_{-i,2r}-m_{-i,2r+1}\in{\mathbb Z}_+,\quad 1\leq i\leq r.
 \end{array}
\label{cond3i}
\end{equation}
\end{prop}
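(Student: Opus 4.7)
The plan is to realize $V(p)$ as the direct limit of the finite-rank Fock representations $V(p,n)$, and to read off the basis of infinite row-stable GZ-patterns from the stabilization results of Section~\ref{sec:D}.

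For each $n\in\Z_+^*$, let $\B(n,n)\subset\B(\infty,\infty)$ denote the subalgebra generated by the $c_i^\pm$ with $i\in[\bar n,n]^*$, and let $V_n\subset V(p)$ be the $\B(n,n)$-submodule generated from $|0\rangle$. Since $|0\rangle$ satisfies precisely the relations~\eqref{Fock} relative to $\B(n,n)$, there is a surjective $\B(n,n)$-homomorphism $V(p,n)\twoheadrightarrow V_n$, and irreducibility of $V(p,n)$ (established in~\cite{SV2015}, independent of the choice of GZ-basis) upgrades this to an isomorphism $V_n\cong V(p,n)$. Because every vector of $V(p)$ is the result of a finite string of $c_i^\pm$-applications, one has $V(p)=\bigcup_n V_n$.

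Next, I would identify the inclusion $V_n\hookrightarrow V_{n+1}$ with the map $\phi_{2n,+2}$ of~\eqref{phi} on the subspace of $V(p,n)$ spanned by $2n$-row-stable vectors. Proposition~\ref{prop10}, together with its analogue for $c_i^-$ (which follows from the same CGC computation, or directly via the Hermiticity~\eqref{herm}), states exactly that the action of $c_i^\pm$ with $i\in[\bar n,n]^*$ commutes with $\phi_{2n,+2}$ on this subspace. Combined with Proposition~\ref{prop7}, applied inside $V(p,n+k)$ for sufficiently large $k$, this guarantees that every $v\in V(p)$ lies, for all sufficiently large $N$, in the $2N$-row-stable subspace of $V_N\cong V(p,N)$, and that its $V(p,N)$-basis expansions for different $N$'s are coherent under iterated $\phi$. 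Iterating upward then produces a well-defined map $v\mapsto|m)^\infty$ from $V(p)$ to formal linear combinations of infinite row-stable GZ-patterns: for any $v$ take $N$ large, expand $v$ in the $V(p,N)$-GZ-basis, and freeze the $2N$-row-stable top part, padding by extra zeros row by row.

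Conversely, any infinite row-stable pattern $|m)^\infty$ as in~\eqref{mni} with stability index $s$ satisfying~\eqref{cond3i} and $\nu_1\leq p$ determines, for every $N$ with $2N\geq s$, a valid GZ-pattern $\pi_N(|m)^\infty)\in V(p,N)$: the conditions~\eqref{cond3i} reduce to those of~\eqref{cond3} with $m_{\bar N,2N}=\nu_1\leq p$, and the $\pi_N$ are related by $\phi_{2N,+2}$. This provides the inverse map, establishing a bijection between infinite row-stable patterns and the basis of $V(p)$. Linear independence descends from each $V(p,N)$: any finite linear relation among infinite row-stable patterns truncates, for $N$ sufficiently large, to a relation in the GZ-basis of $V(p,N)$, which must be trivial. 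The main technical delicate step is the coherence of the direct-limit construction: making precise that Proposition~\ref{prop7} is strong enough that \emph{every} element of $V(p)$ is captured by the row-stable subspaces as $N$ grows, and that the $\phi$-maps intertwine the $c_i^\pm$-actions on general linear combinations and not merely on single basis patterns. Once this coherence is settled, the bijection between basis vectors of $V(p)$ and infinite row-stable patterns of the stated form is immediate.
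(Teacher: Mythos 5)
Your proposal is essentially sound, but it reaches the statement by a genuinely different route than the paper does. The paper gives no standalone proof of this proposition: it is justified a posteriori by what follows, namely the definition of the action of $c_i^\pm$ on infinite row-stable patterns via truncation by $\phi_{2n,\infty}^{-1}$ and the subsequent Theorem, which verifies that the span of these patterns carries the triple relations \eqref{paraospi}, is unitary, satisfies the Fock conditions on $|0\rangle$, and is generated by and irreducible over $|0\rangle$ --- so that by uniqueness of the order-$p$ Fock representation it must be $V(p)$ and the patterns must be a basis. You instead start from the abstract $V(p)$ and exhibit it as $\bigcup_n V_n$ with $V_n\cong V(p,n)$, identifying the inclusions $V_n\hookrightarrow V_{n+1}$ with $\phi_{2n,+2}$. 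Both arguments lean on the same machinery (Propositions~\ref{prop7}--\ref{prop10} and the stability of the coefficients $C^\pm$); your version buys the triple relations for free (they hold in $V(p)$ by hypothesis) at the cost of presupposing the existence of $V(p)$, which is exactly what the paper's constructive route establishes.

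Two points need tightening. First, the surjection $V(p,n)\twoheadrightarrow V_n$ is asserted rather than derived: a priori $V_n$ is only a quotient of the universal module induced from the vacuum conditions, and a map out of the irreducible quotient $V(p,n)$ exists only once you know the maximal submodule dies in $V_n$. The clean argument runs the other way: $V_n$ inherits the positive definite contravariant form from $V(p)$, a unitarizable cyclic lowest weight module is irreducible (its lowest weight space is one-dimensional, so the generator cannot split across a submodule and its orthogonal complement), and two irreducible quotients of the same induced module coincide. Second, the ``coherence'' step you defer is where all the content lives; it is settled by induction on the length of the monomial $A$ in $A|0\rangle$, using Propositions~\ref{prop8} and~\ref{prop9} to keep all intermediate vectors row-stable below row $2N$ for $N$ large, and Proposition~\ref{prop10} (plus its $c_i^-$ analogue) to commute each factor of $A$ past $\phi_{2N,+2}$. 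Once both are filled in, your argument is complete.
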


In order to define the action of the generators $c^\pm_i$ on such vectors $|m)^\infty$, the map $\phi_{2n,+2}$ should be extended in an obvious way.
Let $|m)^{2n}$ be a finite GZ-pattern of type~\eqref{mn} with $2n$ rows, such that row $2n$ is of the form
\[
[\nu_1,\nu_2,\ldots;0,0,\ldots,0].
\]
Then $\phi_{2n,\infty}\left( |m)^{2n}\right)$ is the infinite GZ-pattern of type~\eqref{mni} consisting of the rows of $|m)^{2n}$ to which an infinite number of rows $[\nu_1,\nu_2,\ldots;0,0,\ldots,0]$ are added at the top (all identical, up to additional zeros).
Conversely, if an infinite GZ-pattern $|m)^{\infty}$ is given, which is stable with respect to row $2s$, then one can restrict the infinite pattern to a finite GZ-pattern, and 
\[
|m)^{2s} = \phi_{2s,\infty}^{-1}\left( |m)^{\infty}\right).
\]
Both maps can be extended by linearity.
Now we can define the action of $c^\pm_i$ on vectors $|m)^\infty$ of type~\eqref{mni}.
\begin{defi}
Given a vector $|m)^\infty$ of $V(p)$ with stability index $2s$, and a generator $c^\pm_i$.
Let $2n$ be such that $2n>\max\{2s,\rho(i)\}$.
Then 
\beq
c^\pm_i |m)^\infty = \phi_{2n,\infty} \left( c^\pm_i |m)^{2n} \right), \hbox{ where }
|m)^{2n} = \phi_{2n,\infty}^{-1}\left( |m)^\infty \right).
\label{c-inf}
\eeq
\end{defi}
Note that by Proposition~\ref{prop10}, the value chosen for $2n$ does not play a role, as long as $2n>\max\{2s,\rho(i)\}$.
Indeed, replacing $2n$ by $2n+2$ would give the same action in~\eqref{c-inf}.
Eq.~\eqref{c-inf} really says the following: in order to act with $c^\pm_i$ on an infinite pattern, first truncate this pattern appropriately, then act with $c^\pm_i$ on the truncated pattern, and finally extend all finite patterns thus obtained again to infinite patterns. 

\begin{theo}
The vector space $V(p)$, with basis vectors given by~\eqref{mni}, on which the action of the $\B(\infty,\infty)$ generators $c_i^\pm$ ($i\in\Z^*$) is defined by~\eqref{c-inf}, is an irreducible unitary Fock representation of $\B(\infty,\infty)$.
\end{theo}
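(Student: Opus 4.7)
The proof splits into four parts that I would treat in order: (a) well-definedness of the action \eqref{c-inf}, (b) the parastatistics relations \eqref{paraospi}, (c) the Fock conditions and unitarity, and (d) irreducibility. The overarching strategy is to reduce every claim on $V(p)$ to the corresponding claim already established for $V(p,n)$ at finite rank, using the stability results of Section~\ref{sec:D} as the bridge. For (a), Propositions~\ref{prop8}--\ref{prop9} guarantee that $c_i^\pm|m)^{2n}$ only produces GZ-patterns of $V(p,n)$ whose top rows still have zero second part, so $\phi_{2n,\infty}$ may be applied to each summand; Proposition~\ref{prop10} shows the result is independent of the choice of $n$ in the range $2n>\max\{2s,\rho(i)\}$, and linearity extends \eqref{c-inf} to all of $V(p)$.

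To prove (b), I would evaluate both sides of the triple relation \eqref{paraospi} on a fixed basis vector $|m)^\infty$ and choose $n$ large enough that $2n$ dominates the stability indices of $|m)^\infty$ and of every intermediate vector arising from up to three successive generator applications (finitely many vectors, each with a finite stability index by Propositions~\ref{prop7}--\ref{prop9}). The entire nested expression then equals $\phi_{2n,\infty}$ of the corresponding expression inside $V(p,n)$, where the relation holds by Theorem~\ref{theo1}, and linearity of $\phi_{2n,\infty}$ transfers it to $V(p)$. The Fock identities \eqref{Focki} are handled identically, with $|0\rangle$ corresponding to the all-zero pattern, for which every even row is a stability index. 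For (c), I equip $V(p)$ with the Hermitian form declaring the basis \eqref{mni} orthonormal; via $\phi_{2n,\infty}$ this restricts to the standard finite-rank Fock inner product on $V(p,n)$, and the Hermiticity $(c_j^\pm)^\dagger=c_j^\mp$ between any two basis vectors follows from \eqref{herm} after truncating both vectors to a common large $n$. Positive-definiteness is automatic from the orthonormality of the chosen basis.

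The main obstacle is (d). I would prove it in two stages, \emph{cyclicity} and \emph{uniqueness of the lowest weight vector}, and then combine them via unitarity. For cyclicity, fix a basis vector $|m)^\infty$ with stability index $2s$ and pick $n\geq s$; by irreducibility of $V(p,n)$ as a $\B(n,n)$-module there exists a polynomial $P$ in the operators $c_i^+$, $i\in[\bar n,n]^*$, with $P|0)^{2n}=|m)^{2n}$. Proposition~\ref{prop7} ensures that every monomial of $P$ acts on $|0)^{2n}$ through row-stable vectors with zero second part in the top row, so Proposition~\ref{prop10} applied repeatedly gives $\phi_{2n,\infty}(P|0)^{2n})=P|0\rangle=|m)^\infty$ in $V(p)$. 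For uniqueness, any $v\in V(p)$ annihilated by every $c_j^-$ has finite support in \eqref{mni}; choosing $n$ large enough that $v=\phi_{2n,\infty}(\tilde v)$, the definition \eqref{c-inf} gives $c_j^-\tilde v=0$ for all $j\in[\bar n,n]^*$, whence $\tilde v\in\C|0)^{2n}$ by irreducibility of $V(p,n)$, and so $v\in\C|0\rangle$. Now let $W\subseteq V(p)$ be a nonzero invariant subspace; by unitarity one obtains an invariant decomposition $V(p)=W\oplus W^\perp$, and writing $|0\rangle=w+w^\perp$ and using $0=c_j^-|0\rangle=c_j^- w+c_j^- w^\perp$ with $c_j^- w\in W$ and $c_j^- w^\perp\in W^\perp$ forces $c_j^- w=c_j^- w^\perp=0$, so each of $w,w^\perp$ is a multiple of $|0\rangle$ by uniqueness. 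Orthogonality makes one of them vanish, cyclicity identifies the nonvanishing side with the whole space, and hence $W=V(p)$, completing the proof.
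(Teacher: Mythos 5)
Your proposal is correct and, for parts (a)--(c) and the cyclicity half of (d), it follows exactly the paper's strategy: every identity on $V(p)$ is pulled back through $\phi_{2n,\infty}$ to the finite-rank module $V(p,n)$ for $n$ large enough to dominate the stability indices of all intermediate vectors (the paper makes your ``large enough'' explicit as $2n\geq\max\{2s,\rho(j),\rho(k),\rho(l)\}+6$ for the triple relations, and $2n=2s+2k$ for cyclicity, with $k$ the degree of the polynomial $A$). The only genuine divergence is the final irreducibility step. You prove that the vector annihilated by all $c_j^-$ is unique up to scalar, then split a hypothetical invariant subspace as $V(p)=W\oplus W^\perp$ and locate $|0\rangle$ in one summand; the paper instead argues that one can ``return to the vacuum'' from any basis vector, since $|m)^\infty=A|0\rangle$ and unitarity give ${}^\infty(0|A^\dagger|m)^\infty=1$ with $A^\dagger$ again a polynomial in the generators. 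Your route is arguably more structured (and your uniqueness lemma is a clean reduction to the finite-rank case via Proposition~\ref{prop9} and injectivity of $\phi_{2n,\infty}$), but the decomposition $V(p)=W\oplus W^\perp$ is not automatic for an arbitrary invariant subspace of a pre-Hilbert space spanned by an orthonormal basis, so strictly you should either restrict to closed $W$ or replace that step by projecting onto the lowest weight space using the Cartan action; the paper's adjoint argument has an analogous unstated step (extracting the $|0\rangle$-component of $A^\dagger|m)^\infty$), so both proofs sit at the same level of rigour here.
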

The proof uses stability, and the fact that a finite set of generators $c_i^\pm$ ($i\in[\bar{n},n]^*$) satisfies the defining triple relations when acting on a truncation of stable GZ-patterns.
\begin{proof}
Since $\B(\infty,\infty)$ is generated by the elements $c_i^\pm$ ($i\in\Z^*$), $V(p)$ is a representation if we show that the action of the defining triple relation~\eqref{paraospi} on any vector $|m)^\infty$ of $V(p)$ is valid.
Let $|m)^\infty$ be a vector of $V(p)$, with stability index $2s$.
For any $\xi, \eta, \epsilon\in\{-1,+1\}$ and any $j,k,l\in\Z^*$, let
\beq
A^{\xi,\eta,\epsilon}_{j,k,l} = 
\lb\lb c_{ j}^{\xi}, c_{ k}^{\eta}\rb , c_{l}^{\epsilon}\rb + 2
\delta_{jl}\delta_{\epsilon, -\xi}\epsilon^{\langle l \rangle} 
(-1)^{\langle k \rangle \langle l \rangle }
c_{k}^{\eta} - 2 \epsilon^{\langle l \rangle }
\delta_{kl}\delta_{\epsilon, -\eta} c_{j}^{\xi}.
\label{Ajkl}
\eeq
We need to show that $A^{\xi,\eta,\epsilon}_{j,k,l} |m)^\infty = 0$.
In the given situation, let $n$ be such that
\[
2n \geq \max (2s, \rho(j),\rho(k), \rho(l) \} +6.
\]
The additional 6 is chosen because the action of $A^{\xi,\eta,\epsilon}_{j,k,l}$ involves 3 actions of $c_i^\pm$, and by Proposition~\ref{prop8} every such action could increase the stability index in the resulting vector by 2.
The vector $|m)^{2n} = \phi_{2n,\infty}^{-1}\left(|m)^\infty\right)$ is a basis vector of the corresponding $\B(n,n)$ Fock representation $V(p,n)$, and by construction of $n$ and the fact that $V(p,n)$ is a representation, the following holds:
\beq
A^{\xi,\eta,\epsilon}_{j,k,l} |m)^{2n} = 0.
\label{An}
\eeq
Think of the left hand side of~\eqref{An} as a linear combination of GZ-vectors with $2n$ rows.
By construction and by Propositions~\ref{prop8} and~\ref{prop9}, all these vectors are stable with respect to row $2n$.
Hence one can apply $\phi_{2n,\infty}^{-1}$, leading to
\beq
A^{\xi,\eta,\epsilon}_{j,k,l} |m)^{\infty} = 0.
\label{Ai}
\eeq
So we are dealing with a representation.
The inner product on $V(p)$ is defined by ${}^\infty(m'|m)^\infty =\delta_{m,m'}$.
With this, the representation $V(p)$ is unitary, since the matrix elements of $c^-_i$ and $c^+_i$ are related by
\beq
{}^\infty(m'| c^+_i |m)^\infty = {}^\infty(m| c^-_i |m')^\infty,
\eeq
by a straightforward extension of~\eqref{herm} and using the same stability argument as above.

The basis vector $|0\rangle =|0)^\infty$ (a zero GZ-pattern) clearly satisfies
\beq
\langle 0|0\rangle=1, \qquad c_{j}^- |0\rangle = 0, 
 \qquad
\lb c_j^-, c_k^+ \rb |0\rangle = p\delta_{jk}\, |0\rangle \quad
(j,k\in\Z^*)
\eeq
so we are dealing with a Fock representation.

$V(p)$ is generated by $|0\rangle$ and the action of the $c^\pm_i$'s.
Indeed, let $|m)^\infty$ be any vector of $V(p)$, stable with respect to row $2s$.
Let $|m)^{2s}= \phi_{2s,\infty}\left(|m)^\infty\right)$, then $|m)^{2s}$ is an element of the Fock space $V(p,s)$ of $\B(s,s)$,
since $m_{-s,2s}\leq p$.
Therefore, there must exist a polynomial expression $A$, say of degree~$k$, of products of the generators $c^\pm_i$ ($i\in[\bar{s},s]$) such that
\[
|m)^{2s} = A |0)^{2s}.
\]
In order to apply $\phi_{2s,\infty}$, we must be sure that the actions of $\phi_{2s,\infty}$ and $A$ in $V(p,s)$ commute.
Since each $c_i^\pm$ appearing in $A$ could increase the stability index by~2, let $2n=2s+2k$.
Obviously, also in $V(p,n)$ one hase
\[
|m)^{2n} = A |0)^{2n}.
\]
Now one can apply $\phi_{2n,\infty}^{-1}$, yielding $|m)^\infty = A |0\rangle$. 
But then also irreducibility in $V(p)$ follows, since ${}^\infty(m|A|0)^\infty=1$ implies ${}^\infty(0|A^\dagger|m)^\infty=1$, in other words, one can ``return to the vacuum vector $0\rangle$'' from any vector $|m)^\infty$.
\end{proof}

%%%%%%%%%%%%%%%%%%%%%%%%%%%%%%%%%%%%%%%%%%%%%%%%%%%%%%%%%%%%%%%%%%%%%%%%%%%%%%%
%%%%% conclusion
%%%%%%%%%%%%%%%%%%%%%%%%%%%%%%%%%%%%%%%%%%%%%%%%%%%%%%%%%%%%%%%%%%%%%%%%%%%%%%%
\section{Conclusion} 
\renewcommand{\theequation}{\Alph{section}.\arabic{equation}}
\setcounter{equation}{0}

For many years, the description of parastatistics Fock spaces with an infinite number of parafermions and parabosons was one of our ultimate goals.
In the past, we had already managed to extend our results on Fock spaces for a finite number of parabosons to the case of an infinite number of parabosons~\cite{SV2009}.
At the same time, we could extend our construction of Fock spaces for a finite number of parafermions to that of an infinite number~\cite{SV2009}.
In both cases, the extension to an algebra with an infinite number of creation and annihilation operators as generators turned out to be rather natural.
Also the construction of Fock spaces for the infinite rank case went without much difficulties.

Hence, after completing the description of Fock spaces for a combined system of $m$ parafermions and $n$ parabosons in~\cite{SV2015}, our hope was that the limit process to a combined infinite set would also be quite natural.
Unfortunately, this was not the case.
The Gelfand-Zetlin (GZ) basis of these Fock spaces had no natural limit for $m$ and $n$ going to infinity, essentially because the two triangular GZ patterns (related to those of $\gl(m)$ and $\gl(n)$) cannot grow at the same time.
So we had to rethink the situation.

The solution has been presented in this paper.
The main ingredient is a new GZ basis for the Fock spaces of the finite case, but this time for a combined system of $n$ parafermions and $n$ parabosons.
This GZ basis is referred to as the odd basis, a terminology introduced for $\gl(n|n)$ in~\cite{SV2016}.
The peculiar feature of this new basis is that the patterns consist of two opposite but entangled triangular GZ arrays.
In this new basis it becomes again quite natural to investigate what happens when $n$ grows to infinity, because the new GZ patterns have the right combinatorial properties.
This led us to the introduction of infinite row-stable GZ-patterns, of which we showed that they constitute a basis for the Fock spaces of a combined system of an infinite number of parafermions and parabosons.
In doing so, we also introduced an interesting new matrix form of the orthosymplectic Lie superalgebra $\B(n,n)$, which generalizes naturally to the infinite rank Lie superalgebra $\B(\infty,\infty)$.

%%%%%%%%%%%%%%%%%%%%%%%%%%%%%%%%%%%%%%%%%%%%%%%%%%%%%%%%%%%%%%%%%%%%%%%%%%%%%%%
%%%%% appendix
%%%%%%%%%%%%%%%%%%%%%%%%%%%%%%%%%%%%%%%%%%%%%%%%%%%%%%%%%%%%%%%%%%%%%%%%%%%%%%%
\appendix 
\section{Appendix} \label{A}
\renewcommand{\theequation}{\Alph{section}.\arabic{equation}}
\setcounter{equation}{0}

The purpose of this appendix is to give the Clebsch-Gordan coefficients of $\mathfrak{gl}(n|n)$ corresponding to the tensor product $W([1,0,\ldots,0]) \otimes W([m]^{2n})$ of the $2n$-dimensional standard representation $W([1,0,\ldots,0])$ of $\mathfrak{gl}(n|n)$ with any $\mathfrak{gl}(n|n)$ covariant representation $W([m]^{2n})$.
Such CGC's have been computed in~\cite{CGC}, however in the standard GZ-basis.
Here we need these CGC's in the ``odd GZ-basis'', and that implies new calculations, which are quite involved.

We will briefly describe the technique used to compute these new CGC's. 
As explained in the main text, the following tensor product is valid:
\begin{equation}
W([1,0,\ldots,0]) \otimes W([m]^{2n}) =\bigoplus_{k\in[-n,n]^*} W([m]_{+(k)}^{2n}). \label{Atensprod}
\end{equation}
One can thus formally write down two orthonormal bases in the space~(\ref{Atensprod}): 
\beq
|1_j) \otimes \left| \begin{array}{l} [m]^{2n} \\[2mm] |m)^{2n-1} \end{array} \right)
\in W([1,0,\ldots,0]) \otimes W([m]^{2n}) 
\eeq
and
\beq
\left| \begin{array}{l} [m]^{2n}_{+(k)} \\[2mm] |m^\prime)^{2n-1} \end{array} \right)\in W([m]_{+(k)}^{2n}), \quad
(k\in[-n,n]^*),
\eeq
where the GZ-patterns appearing here satisfy the conditions~(\ref{cond3}), and $|1_j)$ ($j\in[1,2n]$) is a GZ-pattern consisting of only zeros in rows $1,2,\ldots,j-1$ (as usual, counted from the bottom), and rows of the form $1 0 \cdots 0$ in rows $j, j+1,\ldots 2n$. A zero row will be denoted by $0\cdots0=\dot 0$, and a row of the form $1 0 \cdots 0$ by $1\dot 0$.
The coefficients relating the two bases are the CGC's:

Then in general 
\begin{equation}
\left| \begin{array}{l} {[m]^{2n}_{+(k)}} \\[2mm] { |m^\prime)^{2n-1} }\end{array} \right)=
\sum 
  \left( \begin{array}{c}1 0 \cdots 0 0\\[-1mm]
1 0 \cdots 0\\[-1mm] \cdots\\[-1mm] 0 \end{array}; \right. 
	\begin{array}{ll} [m]^{2n} \\[2mm] |m)^{2n-1} \end{array}  
\left| \begin{array}{ll} [m]^{2n}_{+(k)} \\[2mm] |m^\prime)^{2n-1} \end{array} \right)
|1_j) \otimes \left| \begin{array}{l} [m]^{2n} \\[2mm] |m)^{2n-1} \end{array} \right),
\label{eq20}
\end{equation}
where the sum is over all possible values of $j$ and all allowed values of the GZ-labels in the pattern $|m)^{2n-1}$.
It will be useful to denote this CGC also as
\[
\left( \begin{array}{c}1 0 \cdots 0 0\\[-1mm]
1 0 \cdots 0\\[-1mm] \cdots\\[-1mm] 0 \end{array}; \right. 
	\begin{array}{ll} [m]^{2n} \\[2mm] |m)^{2n-1} \end{array}  
\left| \begin{array}{ll} [m]^{2n}_{+(k)} \\[2mm] |m^\prime)^{2n-1} \end{array} \right)
\equiv 
\left(\begin{array}{l}\\[-1mm]
|1_j)\\[-1mm] \\ \end{array}; \right. 
\begin{array}{ll} [m]^{2n} \\[2mm] |m)^{2n-1} \end{array} 
\left| \begin{array}{ll} [m]^{2n}_{+(k)} \\[2mm] |m^\prime)^{2n-1} \end{array} \right) .
\]
Using the actions of~\eqref{E-ii}-\eqref{Eii} on both sides of~\eqref{eq20}, it is clear that GZ-patterns of 
$\left| \begin{array}{l} {[m]^{2n}_{+(k)}} \\[2mm] { |m^\prime)^{2n-1} }\end{array} \right)$
and
$\left| \begin{array}{l} [m]^{2n} \\[2mm] |m)^{2n-1} \end{array} \right)$
are related in the following way:
\begin{itemize}
\item
there is no change in rows $1,2,\ldots,j-1$, i.e.\  $[m']^i = [m]^i$ for $i\in[1,j-1]$; 
\item
in rows $j, j+1,\ldots 2n$ there is a change by one unit for just one label, i.e.\ for $i\in[j,2n]$, $[m']^i$ is obtained from $[m]^i$ by $m'_{s,i}=m_{s,i}+1$ for one particular index $s$.
\end{itemize}

Let us briefly describe the procedure that we followed in order to compute the CGC's for the tensor product~\eqref{Atensprod}.
First of all, recall that the representations of a Lie superalgebra like $\gl(n|n)$ are also $\Z_2$ graded, and we choose the highest weight vectors $w_1\in W([1,0,\ldots,0])$ and $w_2\in W([m]^{2n})$ to be even.
Then it is clear that the highest weight vector $w\in W([m]^{2n}_{+(-n)})$ in the right hand side of~\eqref{Atensprod} is simply $w=w_1\otimes w_2$.
Using the known action of the $\gl(n|n)$ ``lowering'' generators $E_{ij}$ on $w$ from~\cite{SV2016}, one can construct the other weight vectors of $W([m]^{2n}_{+(-n)})$.
Acting by these generators also on vectors of the left hand side of~\eqref{Atensprod}, one can identify vectors on the left and right, and deduce the CGC's of the form~\eqref{eq20} with $k=-n$.
It is important to use the following rule in the actions on tensor products:
\beq
E_{ij} (v_1\otimes v_2) = (E_{ij} v_1) \otimes v_2 + (-1)^{\deg(E_{ij})\deg(v_1)} v_1 \otimes (E_{ij} v_2).
\eeq
Having worked in this way through $W([m]^{2n}_{+(-n)})$, one should next identify the highest weight vector of the second representation in the right hand side of~\eqref{Atensprod}, i.e.\ $w'\in W([m]^{2n}_{+(-n+1)})$.
Since the weight $\mu$ of $w'$ is known, one can construct the most general vector of $W([1,0,\ldots,0]) \otimes W([m]^{2n})$ with the same weight $\mu$, and express that it is orthogonal to each vector of weight $\mu$ belonging to $W([m]^{2n}_{+(-n)})$.
This vector is unique, up to a phase (which is chosen appropriately).
Then one can proceed to the other vectors of $W([m]^{2n}_{+(-n+1)})$, and get all CGC's of the form~\eqref{eq20} with $k=-n+1$.
Continuing in this way, explicit expression of the CGC's can be computed.

For the final formulas of these CGC's, it is useful to split them up in isoscalar factors and a CGC of a lower rank Lie superalgebra.
This is done as follows:
\begin{align}
& \left( \begin{array}{c}1 0 \cdots 0 0\\[-1mm]
1 0 \cdots 0\\[-1mm] \cdots\\[-1mm] 0 \end{array}; \right. 
	\begin{array}{ll} [m]^{2n} \\[2mm] |m)^{2n-1} \end{array}  
\left| \begin{array}{ll} [m]^{2n}_{+(k)} \\[2mm] |m^\prime)^{2n-1} \end{array} \right) \nn\\
&  = 
\left( \begin{array}{l} 1\dot{0} \\ \epsilon \dot{0} \end{array} \right.
\left|\begin{array}{l}  [m]^{2n} \\ {[m]}^{2n-1} \end{array} \right|
\left. \begin{array}{l} [m]^{2n}_{+(k)}  \\ {[m^\prime]}^{2n-1} \end{array} \right) \times
\left( \begin{array}{c}1 0 \cdots 0 0\\[-1mm]
1 0 \cdots 0\\[-1mm] \cdots\\[-1mm] 0 \end{array}; \right. 
	\begin{array}{ll} [m]^{2n-1} \\[2mm] |m)^{2n-2} \end{array}  
\left| \begin{array}{ll} [m^\prime]^{2n-1}_{+(k)} \\[2mm] |m^\prime)^{2n-2} \end{array} \right)
\label{IsoCGC}. 
\end{align}
In the right hand side, the first factor is an isoscalar factor~\cite{Vilenkin}, and the second factor is a CGC of $\mathfrak{gl}(n|n-1)$.
The middle pattern in the $\mathfrak{gl}(n|n-1)$ CGC is that of the $\mathfrak{gl}(n|n)$ CGC with the first row deleted.
The pattern in the isoscalar factor consists of the first two rows of the pattern in the left hand side, so $\epsilon$ is 0 or 1. 
If $\epsilon=0$, then $[m^\prime]^{2n-1}=[m]^{2n-1}$.
If $\epsilon=1$ then $[m^\prime]^{2n-1}=[m_{-n,2n-1},\ldots, m_{s,2n-1}+1,\ldots,m_{n-1,n-1}]=[m]^{2n-1}_{+s}$
for some $s$-value. 

We present now the results of our computation.

\bigskip
\noindent
{\bf Theorem:} 
The Clebsch-Gordan coefficients corresponding to the tensor product 
\[
W([1,0,\ldots,0]) \otimes W([m]^{2n})
\] 
of the standard $(2n)$-dimensional $\mathfrak{gl}(n|n)$  representation $W([1,0,\ldots,0])$ with an irreducible $\mathfrak{gl}(n|n)$ 
covariant tensor module $W([m]^{2n})$ are products of isoscalar factors:\\
for $j=1,2,\ldots,2n-1$: 
\begin{align}
& \left( \begin{array}{l}\\[-1mm]
|1_j)\\[-1mm] \\ \end{array};
\begin{array}{ll} [m]^{2n} \\[2mm] |m)^{2n-1} \end{array}  \right.
 \left| \begin{array}{ll} [m]^{2n}_{+(k)} \\[2mm] |m^\prime)^{2n-1} \end{array} \right)
 =  
\left( \begin{array}{l} 1\dot{0} \\ 1 \dot{0} \end{array}
\left| \begin{array}{l}  [m]^{2n} \\ {[m]}^{2n-1} \end{array} \right.\right|
\left. \begin{array}{l} [m]^{2n}_{+(k)}  \\ {[m^\prime]}^{2n-1} \end{array} \right)\times \ldots \nonumber\\
& \times\left( \begin{array}{l} 1\dot{0} \\ 1 \dot{0} \end{array} 
\left| \begin{array}{l}  [m]^{j+1} \\ {[m]}^{j} \end{array} \right.\right|
\left. \begin{array}{l} [m^\prime]^{j+1}  \\ {[m^\prime]}^{j} \end{array} \right) 
\left( \begin{array}{l} 1\dot{0} \\ 0 \dot{0} \end{array}
\left| \begin{array}{l}  [m]^{j} \\ {[m]}^{j-1} \end{array} \right. \right|
\left. \begin{array}{l} [m^\prime]^{j}  \\ {[m]}^{j-1} \end{array} \right)\times 1;   
 \label{CGC1}
\end{align} 
for $j=2n$:
\begin{align}
&  \left( \begin{array}{l}\\[-1mm]
|1_{2n})\\[-1mm] \\ \end{array};
\begin{array}{ll} [m]^{2n} \\[2mm] |m)^{2n-1} \end{array}  \right.
 \left| \begin{array}{ll} [m]^{2n}_{+(k)} \\[2mm] |m)^{2n-1} \end{array} \right)
 =  (-1)^{\sum_{i=1}^n\sum_{j=-i}^{-1}\theta_{j,2i-1}+\sum_{i=1}^{n-1}\sum_{j=1}^{i}\theta_{j,2i}}
\nonumber\\
&  \times
\left( \begin{array}{l} 1\dot{0} \\ 0 \dot{0} \end{array}
\left| \begin{array}{l}  [m]^{2n} \\ {[m]}^{2n-1} \end{array} \right. \right|
\left. \begin{array}{l} [m]^{2n}_{+(k)}  \\ {[m]}^{2n-1} \end{array} \right).  
 \label{CGC10}
\end{align} 

\noindent
In the right hand side of~(\ref{CGC1})-(\ref{CGC10}),
\[
\left( \begin{array}{l} 1 \dot{0} \\ \epsilon \dot{0} \end{array}
\left|\begin{array}{l} [m]^{t} \\ {[m]}^{t-1} \end{array} \right.  \right|
\left. \begin{array}{l} [m]^{t}_{+(k)} \\ {[m']}^{t-1} \end{array} \right) \qquad (\epsilon\in\{0,1\})
\]
is a $\mathfrak{gl}(t|t)\supset\mathfrak{gl}(t|t-1)$ ($1\leq t\leq n$) or
 a $\mathfrak{gl}(t|t-1)\supset\mathfrak{gl}(t-1|t-1)$ ($2\leq t\leq n$) isoscalar factor following the chain of subalgebras
\begin{equation}
\mathfrak{gl}(n|n) \supset \mathfrak{gl}(n|n-1) \supset \mathfrak{gl}(n-1|n-1) \supset \mathfrak{gl}(n-1|n-2) \supset \mathfrak{gl}(n-2|n-2) \supset 
\cdots \supset \mathfrak{gl}(1|1) \supset \mathfrak{gl}(1|0)\equiv \mathfrak{gl}(1) .
\label{chain-nn}
\end{equation}

\noindent
For the $\mathfrak{gl}(t|t)\supset \mathfrak{gl}(t|t-1)$ isoscalar factors there are six different expressions,
depending on the position of the pattern changes in the right hand side.
For all these expressions, it is common to use other labels than the $m_{ij}$'s:
\begin{equation}
l_{i,2k-\varphi}=m_{i,2k-\varphi}-i \; (-k\leq i \leq -1);\;
l_{s,2k-\varphi}=-m_{s,2k-\varphi}+s \;(1\leq s\leq k-\varphi), %\nonumber\\
\; \varphi =0,1; \; k=1,\ldots,n.
\label{lir}
\end{equation}
Furthermore, $\prod_{i=a(\neq k)}^{b}$ means the product over all $i$-values
running from $a$ to $b$, but excluding $i=k$. \\
These six expressions are given by~\eqref{liso1}-\eqref{liso6}:
\begin{align}
 &\left( \begin{array}{l} 1 \dot{0} \\0 \dot{0} \end{array}
\left|\begin{array}{l} [m]^{2t} \\ {[m]}^{2t-1} \end{array} \right.  \right|
\left. \begin{array}{l} [m]^{2t}_{+(k)} \\ {[m]}^{2t-1} \end{array} \right) 
= (-1)^{t+k}(1-\theta_{k,2t-1})(-1)^{{\sum_{i=k}^{-1}}\theta_{i,2t-1}}\nonumber\\
 &
\times
\left( \prod_{i=-t(\neq k)}^{-1} \left(\frac{l_{k,2t}-l_{i,2t}+1}{l_{k,2t}-l_{i,2t-1}}
\right)  
 \frac{\prod_{s=1}^{t-1}  
(l_{k,2t}-l_{s,2t-1} )}
{ \prod_{s=1}^{t} (l_{k,2t}-l_{s,2t}+1)}
\right)^{1/2} \quad(1\leq t\leq n; \; -t\leq k \leq -1);\label{liso1}
\end{align}

\begin{align}
\left( \begin{array}{l} 1 \dot{0} \\0 \dot{0} \end{array}
\left| \begin{array}{l} [m]^{2t} \\ {[m]}^{2t-1} \end{array} \right. \right|
\left. \begin{array}{l} [m]^{2t}_{+(k)} \\ {[m]}^{2t-1} \end{array} \right)
&= \left( \prod_{i=-t}^{-1} \left(\frac{l_{i,2t}-l_{k,2t}}{l_{i,2t-1}-l_{k,2t}+1}\right)  
 \frac{\prod_{s=1}^{t-1}  (l_{s,2t-1}-l_{k,2t}+1 )}{ \prod_{s=1(\neq k)}^{t} (l_{s,2t}-l_{k,2t})} \right)^{1/2} \nonumber\\
& \hskip 4cm \quad (2\leq t \leq n;\;1\leq k \leq t) 
\label{liso2} \\
& = (-1)^{\theta_{11}}\left( \frac{l_{-1,2}-l_{1,2}}{l_{-1,1}-l_{1,2}+1}
\right)^{1/2} \qquad(t=k=1);\nonumber
\end{align}

\begin{align}
&\left( \begin{array}{l} 1 \dot{0} \\1 \dot{0} \end{array} 
\left| \begin{array}{l} [m]^{2t} \\ {[m]}^{2t-1} \end{array} \right.\right|
\left. \begin{array}{l} [m]^{2t}_{+(k)} \\ {[m]}^{2t-1}_{+(q)} \end{array} \right)
=(-1)^{k+q}(-1)^{{\sum_{i=\min(k+1,q+1)}^{\max(k-1,q-1)}}\theta_{i,2t-1}}(\delta_{kq}+(1-\delta_{kq})\theta_{q,2t-1}(1-\theta_{k,2t-1})) \nonumber\\
&\times \left( \prod_{i=-t(\neq k,q)}^{-1} %\left(
\frac{(l_{i,2t-1}-l_{k,2t-1}-1-\delta_{kq}+2\theta_{i,2t-1})
(l_{i,2t-1}-l_{q,2t-1})}{(l_{i,2t}-l_{k,2t})(l_{i,2t}-l_{q,2t})}
\right)^{{\theta_{q,2t-1}}/2} \nonumber\\  
&
\times\frac{1}{(l_{k,2t}-l_{q,2t})^{1-\delta_{kq}}}
\left( \prod_{s=1}^{t} \left(
\frac{l_{q,2t}-l_{s,2t}}
{l_{k,2t}-l_{s,2t}+1}
\right)
\prod_{s=1}^{t-1} \left(
\frac{l_{k,2t}-l_{s,2t-1}}
{l_{q,2t-1}-l_{s,2t-1}}
\right)
\right)^{{\theta_{q,2t-1}}/2}
\quad (-t\leq k,q \leq -1);\label{liso3}
\end{align} 

\begin{align}
&\left( \begin{array}{l} 1 \dot{0} \\1 \dot{0} \end{array} 
\left| \begin{array}{l} [m]^{2t} \\ {[m]}^{2t-1} \end{array} \right. \right|
\left. \begin{array}{l} [m]^{2t}_{+(k)} \\ {[m]}^{2t-1}_{+(q)} \end{array} \right)
=(-1)^{k+t+1}(-1)^{{\sum_{i=-t}^{k-1}}\theta_{i,2t-1}}(1-\theta_{k,2t-1})
\left(\frac{1}{l_{k,2t}-l_{q,2t-1}}\right)^{1/2}
\nonumber\\
&\times \left( \prod_{i=-t(\neq k)}^{-1} \left(\frac{(l_{i,2t-1}-l_{k,2t-1}-1+2\theta_{i,2t-1})
(l_{i,2t-1}-l_{q,2t-1}+1)}{(l_{i,2t}-l_{k,2t})(l_{i,2t}-l_{q,2t-1})}
\right)\right)^{1/2} \nonumber\\  
& \times
\left( \prod_{s=1}^{t} \left(\frac{|l_{s,2t}-l_{q,2t-1}|}
{(l_{k,2t}-l_{s,2t}+1)}
\right) \prod_{s=1(\neq q)}^{t-1} \left(\frac{l_{k,2t}-l_{s,2t-1}}
{|l_{s,2t-1}-l_{q,2t-1}+1|}
\right)\right)^{1/2}\nonumber\\
& \hskip 8cm (-t\leq k \leq -1, \quad 1\leq q \leq t-1);\label{liso4}
\end{align}

\begin{align}
&\left( \begin{array}{l} 1 \dot{0} \\1 \dot{0} \end{array} 
\left| \begin{array}{l} [m]^{2t} \\ {[m]}^{2t-1} \end{array} \right. \right|
\left. \begin{array}{l} [m]^{2t}_{+(k)} \\ {[m]}^{2t-1}_{+(q)} \end{array} \right)
=(-1)^{q+t+1}(-1)^{{\sum_{i=q+1}^{-1}}\theta_{i,2t-1}}\theta_{q,2t-1}
\left(\frac{1}{l_{q,2t}-l_{k,2t}+1}\right)^{1/2}
\nonumber\\
&\times \left( \prod_{i=-t}^{-1} \left(\frac{l_{i,2t}-l_{k,2t}}{l_{i,2t-1}-l_{k,2t}+1}
\right) \prod_{i=-t(\neq q)}^{-1} \Big|\frac{l_{q,2t-1}-l_{i,2t-1}}{l_{q,2t}-l_{i,2t}}
 \Big| 
 \prod_{s=1(\neq k)}^{t} \Big|\frac{l_{q,2t}-l_{s,2t}}{l_{s,2t}-l_{k,2t}}
\Big|
 \prod_{s=1}^{t-1} \Big|\frac{l_{s,2t-1}-l_{k,2t}+1}{l_{q,2t}-l_{s,2t-1}-1}
 \Big|\right)^{1/2}\nonumber\\
& \hskip 8cm (1\leq k \leq t, \quad -t\leq q \leq -1);\label{liso5}
\end{align}

\begin{align} 
&\left( \begin{array}{l} 1 \dot{0} \\1 \dot{0} \end{array} 
\left| \begin{array}{l} [m]^{2t} \\ {[m]}^{2t-1} \end{array} \right.\right|
\left. \begin{array}{l} [m]^{2t}_{+(k)} \\ {[m]}^{2t-1}_{+(q)} \end{array} \right)
=S(k,q) (-1)^{{\sum_{i=-t}^{-1}}\theta_{i,2t-1}}\left( \prod_{i=-t}^{-1} \left(\frac{(l_{i,2t}-l_{k,2t})
(l_{i,2t-1}-l_{q,2t-1}+1)}{(l_{i,2t-1}-l_{k,2t}+1)(l_{i,2t}-l_{q,2t-1})}
\right)\right)^{1/2} \nonumber\\ 
& \times
\left( \prod_{s=1(\neq k)}^{t} \Big|\frac{l_{s,2t}-l_{q,2t-1}}
{l_{s,2t}-l_{k,2t}}\Big|
 \prod_{s=1(\neq q)}^{t-1} \Big| \frac{l_{s,2t-1}-l_{k,2t}+1}
{l_{s,2t-1}-l_{q,2t-1}+1}
\Big| \right)^{1/2} 
\quad(1\leq k \leq t, \quad 1\leq q \leq t-1).\label{liso6}
\end{align}

\noindent
Also for the $\mathfrak{gl}(t|t-1)\supset \mathfrak{gl}(t-1|t-1)$ isoscalar factors there are six different expressions,
depending on the position of the pattern changes in the right hand side. 
These six expressions are given by~\eqref{Sliso1}-\eqref{Sliso6}:

\begin{align}
&\left( \begin{array}{l} 1 \dot{0} \\\;\; \dot{0} \end{array}
\left|\begin{array}{l} [m]^{2t-1} \\ {[m]}^{2t-2} \end{array} \right.  \right|
\left. \begin{array}{l} [m]^{2t-1}_{+(k)} \\ {[m]}^{2t-2} \end{array} \right) 
\nonumber\\
&
= (-1)^{t+k} \left( \frac{\prod_{s=-t+1}^{-1}  
(l_{k,2t-1}-l_{s,2t-2} )}
{ \prod_{s=-t(\neq k)}^{-1} (l_{k,2t-1}-l_{s,2t-1})}
 \prod_{i=1}^{t-1} \left(\frac{l_{k,2t-1}-l_{i,2t-1}}{l_{k,2t-1}-l_{i,2t-2}}
\right) 
\right)^{1/2} \quad(-t\leq k \leq -1);\label{Sliso1}
\end{align}

\begin{align}
&\left( \begin{array}{l} 1 \dot{0} \\\;\; \dot{0} \end{array}
\left| \begin{array}{l} [m]^{2t-1} \\ {[m]}^{2t-2} \end{array} \right. \right|
\left. \begin{array}{l} [m]^{2t-1}_{+(k)} \\ {[m]}^{2t-2} \end{array} \right)
= \theta_{k, 2t-2}(-1)^k(-1)^{\sum_{i=k+1}^{t-1}\theta_{i,2t-2}}\nonumber\\
&
\times \left( \frac{\prod_{s=-t+1}^{-1}  
(l_{s,2t-2}-l_{k,2t-1}+1 )}
{ \prod_{s=-t}^{-1} (l_{s,2t-1}-l_{k,2t-1}+1)}
 \prod_{i=1(\neq k)}^{t-1} \left(\frac{l_{i,2t-1}-l_{k,2t-1}+1}{l_{i,2t-2}-l_{k,2t-1}+1}
\right) 
\right)^{1/2} \quad(1\leq k \leq t-1);
\label{Sliso2}
\end{align}

\begin{align}
&\left( \begin{array}{l} 1 0 \dot{0} \\\;\; 1 \dot{0} \end{array} 
\left| \begin{array}{l} [m]^{2t-1} \\ {[m]}^{2t-2} \end{array} \right.\right|
\left. \begin{array}{l} [m]^{2t-1}_{+(k)} \\ {[m]}^{2t-2}_{+(q)} \end{array} \right)
\nonumber\\
&
=(-1)^{k+q}S(-k,-q)
\left( \prod_{i=-t(\neq k)}^{-1} \left(
\frac{l_{q,2t-2}-l_{i,2t-1}+1}
{l_{k,2t-1}-l_{i,2t-1}}
\right)
\prod_{i=-t+1(\neq q)}^{-1} \left(
\frac{l_{k,2t-1}-l_{i,2t-2}}
{l_{q,2t-2}-l_{i,2t-2}+1}
\right)
\right)^{1/2}\nonumber\\
&\times \left( \prod_{s=1}^{t-1} %\left(
\frac{(l_{k,2t-1}-l_{s,2t-1})
(l_{q,2t-2}-l_{s,2t-2}+1)}{(l_{k,2t-1}-l_{s,2t-2})(l_{q,2t-2}-l_{s,2t-1}+1)}
\right)^{1/2} 
\quad (-t\leq k\leq -1, \;-t+1\leq q\leq -1) ;\label{Sliso3}
\end{align}

\begin{align}
&\left( \begin{array}{l} 1 0 \dot{0} \\\;\; 1 \dot{0} \end{array} 
\left| \begin{array}{l} [m]^{2t-1} \\ {[m]}^{2t-2} \end{array} \right. \right|
\left. \begin{array}{l} [m]^{2t-1}_{+(k)} \\ {[m]}^{2t-2}_{+(q)} \end{array} \right)
=(-1)^{k+t}(-1)^{{\sum_{i=1}^{q-1}}\theta_{i,2t-2}}(1-\theta_{q,2t-2})
\left(\frac{1}{l_{k,2t-1}-l_{q,2t-2}+1}\right)^{1/2}
\nonumber\\
&\times \left( \prod_{i=-t(\neq k)}^{-1} \left(\frac{l_{i,2t-1}-l_{q,2t-2}
}{l_{k,2t-1}-l_{i,2t-1}}
\right)   \prod_{i=-t+1}^{-1} \left(\frac{l_{k,2t-1}-l_{i,2t-2}
}{l_{i,2t-2}-l_{q,2t-2}}
\right)       \right)^{1/2} \nonumber\\  
& \times
\left(  \prod_{s=1(\neq q)}^{t-1} \left(\frac{(l_{k,2t-1}-l_{s,2t-1})(l_{s,2t-2}-l_{q,2t-2})}
{(l_{k,2t-1}-l_{s,2t-2})(l_{s,2t-1}-l_{q,2t-2})}
\right)\right)^{1/2} \quad (-t\leq k \leq -1, \quad 1\leq q \leq t-1);\label{Sliso4}
\end{align}

\begin{align}
&\left( \begin{array}{l} 1 0 \dot{0} \\\;\; 1 \dot{0} \end{array} 
\left| \begin{array}{l} [m]^{2t-1} \\ {[m]}^{2t-2} \end{array} \right. \right|
\left. \begin{array}{l} [m]^{2t-1}_{+(k)} \\ {[m]}^{2t-2}_{+(q)} \end{array} \right)
=(-1)^{k+q+t}(-1)^{{\sum_{i=k+1}^{t-1}}\theta_{i,2t-2}}\theta_{k,2t-2}
\left(\frac{1}{l_{q,2t-2}-l_{k,2t-1}+1}\right)^{1/2}
\nonumber\\
&\times \left( \prod_{i=-t}^{-1} \left(\frac{|l_{i,2t-1}-l_{q,2t-2}-1|}{l_{i,2t-1}-l_{k,2t-1}+1}
\right) \prod_{i=-t+1(\neq q)}^{-1}  \left(   \frac{l_{i,2t-2}-l_{k,2t-1}+1}{|l_{i,2t-2}-l_{q,2t-2}-1|}\right) \right)^{1/2}
\nonumber\\ 
&\times \left(\prod_{s=1(\neq k)}^{t-1} \left(\frac{(l_{q,2t-2}-l_{s,2t-2}+1)(l_{s,2t-1}-l_{k,2t-1}+1)}{(l_{q,2t-2}-l_{s,2t-1}+1)(l_{s,2t-2}-l_{k,2t-1}+1)}\right)
 \right)^{1/2} \quad  (1\leq k \leq t-1, \ -t+1\leq q \leq -1);\label{Sliso5}
\end{align}

\begin{align}
&\left( \begin{array}{l} 1 0 \dot{0} \\\;\; 1 \dot{0} \end{array} 
\left| \begin{array}{l} [m]^{2t-1} \\ {[m]}^{2t-2} \end{array} \right.\right|
\left. \begin{array}{l} [m]^{2t-1}_{+(k)} \\ {[m]}^{2t-2}_{+(q)} \end{array} \right)
=(-1)^{k-1}(-1)^{{\sum_{i=1}^{t-1}}\theta_{i,2t-2}}(\delta_{kq}+(1-\delta_{kq})\theta_{k,2t-2}(1-\theta_{q,2t-2})) \nonumber\\
&
\times\frac{1}{(l_{k,2t-1}-l_{q,2t-1})^{1-\delta_{kq}}}
\left( \prod_{i=-t}^{-1} \left(
\frac{l_{i,2t-1}-l_{q,2t-2}}
{l_{i,2t-1}-l_{k,2t-1}+1}
\right)
\prod_{i=-t+1}^{-1} \left(
\frac{l_{i,2t-2}-l_{k,2t-1}+1}
{l_{i,2t-2}-l_{q,2t-2}}
\right)
\right)^{{(1-\theta_{q,2t-2})}/2}
\nonumber\\  
&\times \left( \prod_{s=1(\neq k,q)}^{t-1} %\left(
\frac{(l_{s,2t-2}-l_{q,2t-2}+1-\delta_{kq}+2\theta_{s,2t-2})
(l_{s,2t-2}-l_{q,2t-2})}{(l_{s,2t-1}-l_{k,2t-1})(l_{s,2t-1}-l_{q,2t-2})}
\right)^{{(1-\theta_{q,2t-2})}/2}
\ (1\leq k,q \leq t-1);\label{Sliso6}
\end{align} 

\noindent
In all of the formulas above, $\dot{0}$ stands for an appropriate sequence of zeros in the pattern, and we use   
\beq
S(k,q) = \left\{ \begin{array}{rcl}
 {1} &  \hbox{for} &  k\leq q  \\ 
 {-1} &  \hbox{for} &  k>q.
 \end{array}\right. .
\label{SS}
\eeq

\section*{Acknowledgments}
The authors were supported by the Joint Research Project ``Lie superalgebras - applications in quantum theory'' in the framework of an international collaboration programme between the Research Foundation -- Flanders (FWO) and the Bulgarian Academy of Sciences. 
N.I. Stoilova was partially supported by the Bulgarian National Science Fund, grant KP-06-N28/6, and J. Van der Jeugt was partially supported by the EOS Research Project 30889451.

\end{document}